\newtheorem{theorem}{\textbf{Theorem}}
\newtheorem{lemma}{\textbf{Lemma}}
\newtheorem{proposition}{\textbf{Proposition}}
\newtheorem{corollary}{\textbf{Corollary}}
\newtheorem{remark}{\textbf{Remark}}
\newtheorem{remark*}{\textbf{Remark}}
\newtheorem{definition}{\textbf{Definition}}
\newtheorem{example}{\textbf{Example}}
\newcommand{\N}{\mathbb{N}}
\newcommand{\Z}{\mathbb{Z}}
\newcommand{\R}{\mathbb{R}}
\newcommand{\CC}{\mathbb{C}}
\newcommand{\real}{\mathrm{Re}}
\renewcommand{\int}[1]{[{#1}]}
\newcommand{\distance}{\mathbf{d}}
\newcommand{\polygon}{\chi} 
\newcommand{\dominantlines}{\mathbf{L}}
\newcommand{\hausdorff}{\mathrm{d}_H} 
\newcommand{\cone}{\angle} 
\newcommand{\tendsto}[1]{\underset{{#1}\to\infty}{\longrightarrow}} 
\newcommand{\overbow}[1]{
   \tikz [baseline = (N.base), every node/.style={}] {
      \node [inner sep = 0pt] (N) {$#1$};
      \draw [line width = 0.4pt] plot [smooth, tension=1.3] coordinates {
         ($(N.north west) + (0.1ex,0)$)
         ($(N.north)      + (0,0.5ex)$)
         ($(N.north east) + (0,0)$)
      };
   }
} 
\newcommand{\grid}{G}
\newcommand{\multigrid}{H}
\newcommand{\tiling}{T}
\newcommand{\dual}[1]{\widetilde{#1}}
\newcommand{\dualization}{F}
\newcommand{\lindual}{\mathcal{F}}
\newcommand{\imag}{\mathrm{i}}
\newcommand{\diam}{\mathrm{diam}}
\newcommand{\dominantlineoffset}{\nu}
\title{Polygonal corona limit on multigrid dual tilings}
\author[1,3,4]{Victor Lutfalla}
\author[1,2]{K\'evin Perrot}
\affil[1]{Universit\'e publique}
\affil[2]{Aix-Marseille Univ., Univ. Toulon, CNRS, LIS, UMR 7020, Marseille, France}
\affil[3]{Univ. de Caen, CNRS, GREYC, UMR 6072, Caen, France}
\affil[4]{Aix-Marseille Univ., CNRS, I2M, UMR 7373, Marseille, France}
\date{}
\begin{document}
\renewcommand{\labelitemi}{$\circ$}
\renewcommand{\labelitemii}{$\circ$}
\setlist[itemize,enumerate]{nosep}
\maketitle

\begin{abstract}
  The growth pattern of an invasive cell-to-cell propagation
  (called the successive coronas) on the square grid is a tilted square.
  On the triangular and hexagonal grids, it is an hexagon.
  It is remarkable that, on the aperiodic structure of Penrose tilings,
  this cell-to-cell diffusion process tends to a regular decagon (at the limit).
  On any multigrid dual tiling, it tends to a polygon which we call characteristic polygon.
  In this article we provide a complete and self-contained proof of this result.
  Exploiting this elegant duality allows to fully understand why such surprising phenomena,
  of seeing highly regular polygonal shapes emerge from aperiodic underlying structures, happen.
\end{abstract}

\section{Introduction}

A geometric edge-to-edge tiling is a covering of the two dimensional plane by polygonal tiles,
such that there is no hole, no two tiles overlap, and adjacent tiles share a full edge.
When all tiles are rhombuses (have four sides of equal lengths),
we call it an edge-to-edge rhombus tiling.
The most famous examples are certainly Penrose tilings~\cite{p74}.
They received great attention, for their aesthetics and combinatorial properties
in connexion with the growth of quasicrystals~\cite{bg17}.
They have the property of being aperiodic (no translation vector leaves the tiling unchanged),
and quasiperiodic (every finite subset of tiles appears infinitely often).

In~\cite{ai16}, it has been proven that, at first sight surprisingly, regular decagons
emerge as fundamental elements of the structure of Penrose tilings.
From an initial finite set of selected tiles called patch,
the edge-to-edge diffusion process
(at each discrete step, tiles adjacent to the current selected patch are included in the selection)
produces a regular decagon at the limit (after renormalization).
This limit shape is called the corona limit.
The authors of~\cite{ai16} studied the corona limits of Penrose tilings
through the pattern of signal propagation in a simple growth cellular automata,
using combinatorial tools related to local dynamical behaviors specific to Penrose tilings (Ammann bars).
Regular corona limits obviously appear on simple periodic tilings such as triangular, square and hexagonal grids,
and have also been characterized on other periodic tilings~\cite{acik19}.
The corona limits of multigrid dual tilings are also discussed in~\cite{dhf22},
where the authors state a similar result without a full formal proof.

In the 1980s, de Bruijn discovered a duality between a class of edge-to-edge rhombus tilings
and regular multigrids~\cite{db81,db86}.
That is, the former are obtained from a finite family of infinitely many evenly spaced parallel lines
(along a finite set of directions called normal vectors;
each group of parallel lines also has a reference position called offset)
by a correspondence associating a tile to each intersection of two lines.
The corresponding tile is, in terms of Euclidean distance, not far from the intersection (up to a uniform linear map).
To lever the results of~\cite{ai16}, we first consider corona limits on the multigrid,
in order to take advantage of its embodiment of the tiling's regularities.
During a second step, we transfer our characterization of corona limits on the multigrid,
to the dual edge-to-edge rhombus tilings.

Limit shapes of growth processes on $\R^d$ (and $\Z^d$) have been studied in~\cite{gg93},
for threshold dynamics ($\theta>0$) defined over a given finite neighborhood ($N\subset\R^d$):
an element $x\in\R^d$ is added to the current selection $A\subset\R^d$ at the next step when
the Lebesgues measure of its set of selected neighbors reaches the threshold ($|A\cap(x+N)|\geq\theta$).
It has been proven that, except in degenerated cases,
there is a unique limit shape which is a polygon.
Growth processes on groups also received attention.
In~\cite{em23} the authors consider strongly connected groups $\Gamma$
and the iterated Minkowski sum $\phi_A(W)=\{wa \mid w\in W, a\in A\}$.
When $A$ contains the neutral element and generates $\Gamma$,
it has been proven that $\phi_A^n(\{v\})$ tends to $\Gamma$ (as $n$ tends to $\infty$).
Moreover, the growth of iterated Minkowski sums $\phi_A^n(\{v\})$
provide fine informations related to the geometry of the group.

The edge-to-edge propagation of coronas on tilings can be described in terms of sandpiles~\cite{btw87}.
A sandpile configuration associates an integer to each tile, which is its number of sand grains.
When a tile has as many grains as adjacent tiles, it gives one grain to each of its adjacent tiles.
Considering the maximum stable configuration having, on each tile,
one less grain than its number of adjacent tiles ($3$ in the case of rhombuses),
a single grain addition at tile $t$ triggers a chain reaction of topplings that corresponds
to computing the successive coronas, starting from the initial patch $\{t\}$.
In this regards, reference \cite{fp22} experiments the roundness of shapes obtained
when adding the so called identity element of the sandpile group (on finite tilings).
One may play with coronas using JS-Sandpiles,
a sandpile simulator written in javascript
(ready to play in your browser).\\
\centerline{\url{https://github.com/huacayacauh/JS-Sandpile/}}
Simply generate a tiling, set the maximum stable configuration, add one (or multiple) sand grains, then play.
Figure~\ref{fig:corona_example_big} has been exported from this software.

We define multigrids and their dual tilings in Section~\ref{s:multigrid},
where we state our main result: the corona limit of any multigrid dual tiling 
is a polygon (defined as its characteristic polygon)
having parallel opposite sides whose directions are given by the multigrid normal vectors.
Section~\ref{s:corona} splits the proof into three steps:
studying first the corona along individual multigrid lines,
second in the full multigrid,
third transfer the result to the dual tiling.
In Section~\ref{s:conclusion} we conclude and present perspectives.

As mentionned above, the main result that we present here might not be considered as new
because a similar one was stated in~\cite{dhf22} in the context of crystallography.
However, the authors do not provide a full formal proof but rely heavily on
crystallographic \emph{folklore} results and informally reuse previous statements
that were initially stated and proved for the periodic case~\cite{z01, z02}. 
Here we provide a complete, self-contained, and independent proof of this result.

\section{Multigrids, dual tilings, and coronas}
\label{s:multigrid}

We identify $\R^2$ with $\CC$,
denote $\overline{\zeta}$ the complex conjugate of $\zeta\in\CC$,
and $\zeta^\bot$ its orthogonal.
We denote the imaginary number $\imag$, it is usually easily distinguishable from context with the integer index $i$.
Given $z,z'\in\CC$, let $z\cdot z'$ denote their scalar product,
\emph{i.e.}, $z\cdot z':=\real(z\overline{z'})$. 
We present multigrids and their dual tilings, which are rhombus tilings (Lemma~\ref{lemma:rhombus}).
Then we introduce the corona limit of a patch, and state our main result (Theorem~\ref{theorem:corona}).

\begin{definition}[Grid and multigrid]
  We call \emph{grid} of normal vector $\zeta\in\CC$ (such that $|\zeta|=1$) and offset $\gamma\in\R$ the set of parallel lines
  \[
    \grid(\zeta,\gamma) := \{ z \in \mathbb{C}, z\cdot \zeta - \gamma \in \mathbb{Z}\}.
  \]

  We call \emph{multigrid} a union of grids.
  Given an integer $d\in\N_+$, a $d$-tuple $\zeta = (\zeta_i)_{0\leq i < d}$ and a $d$-tuple $\gamma = (\gamma_i)_{0\leq i <d}$,
  the multigrid of normal vectors $\zeta$ and offsets $\gamma$ is defined as
  \[
    \multigrid(\zeta,\gamma) := \bigcup_{0\leq i <d} \grid(\zeta_i, \gamma_i).
  \]

  A multigrid is called \emph{singular} when there exists a point where at least 3 lines (of different grids) intersect, and \emph{regular} otherwise.
  
\end{definition}

\begin{figure}
  \center
  \begin{subfigure}{0.45\textwidth}
    \includegraphics[width=\textwidth]{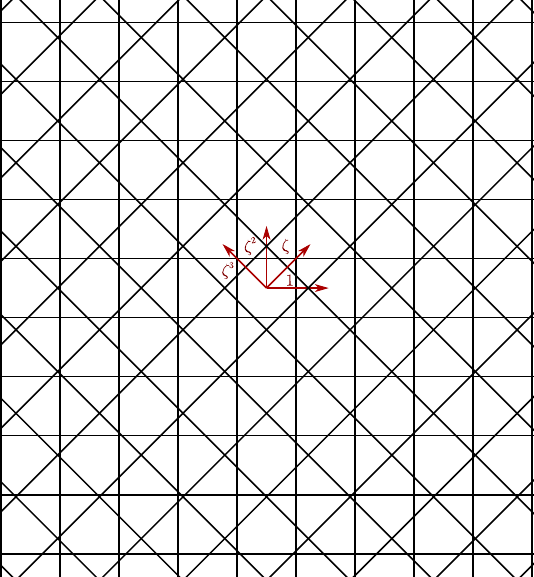}
    \caption{
      Multigrid of normal vectors $\zeta = (e^{i\pi/4})_{0\leq i < 4}$
      and offsets $\tfrac{1}{2}$~\cite{b82}.
    }
  \end{subfigure}
  \hspace*{1cm}
  \begin{subfigure}{0.45\textwidth}
    \includegraphics[width=\textwidth]{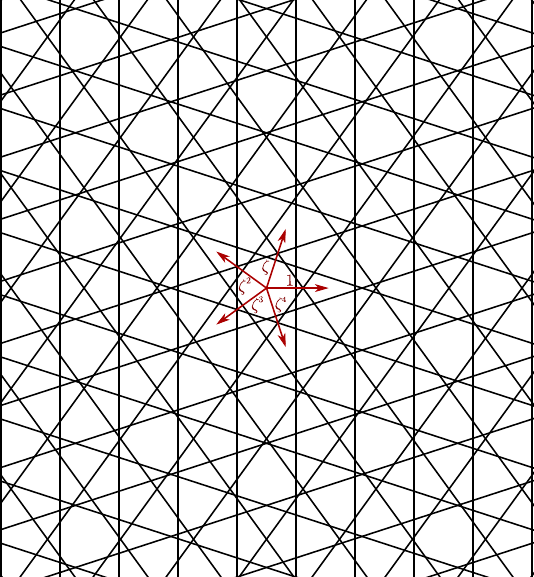}
    \caption{
      Pentagrid of offsets $\tfrac{1}{2}$~\cite{db81}.
    }
  \end{subfigure}
  \caption{Examples of multigrids.}
  \label{fig:multigrid_example}
\end{figure}

When $d$ is odd and $\zeta = (e^{2i\pi/d})_{0\leq i < d}$ we call $\multigrid(\zeta,\gamma)$ the $d$-fold multigrid of offsets $\gamma$.
When $d=5$ we call it the pentagrid of offsets $\gamma$~\cite{db81}.
See Figure~\ref{fig:multigrid_example}.

\begin{definition}[Dualization and multigrid dual tiling]
  Given a multigrid $\multigrid(\zeta,\gamma)$, we define the \emph{dualization function} $\dualization:\CC\to\CC$ as
  \[
    \dualization(z) := \sum\limits_{0\leq i<d} \left\lceil z\cdot \zeta_i- \gamma_i\right\rceil\zeta_i.
  \]

  The \emph{multigrid dual tiling} $\tiling(\zeta,\gamma)$ is defined by its set of vertices $V$ and of edges $E$ as
  \[
    V:=\dualization(\CC) \qquad
    E:= \{ \{z,z'\} \mid z,z' \in V \text{ and } \exists i, z' = z + \zeta_i\}.
    \]
    \label{def:dualization}
\end{definition}

\begin{figure}
  \begin{subfigure}{0.45\textwidth}
    \includegraphics[width=\textwidth]{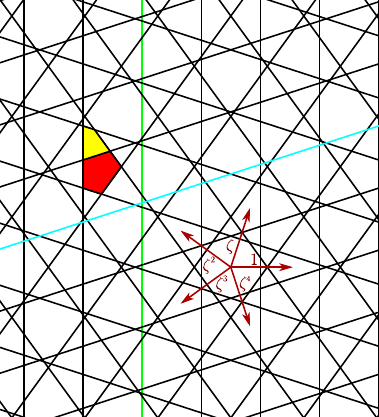}
    \caption{Pentagrid of offsets $\tfrac{1}{2}$.}
  \end{subfigure}
  \hspace*{1cm}
  \begin{subfigure}{0.45\textwidth}
    \includegraphics[width=\textwidth]{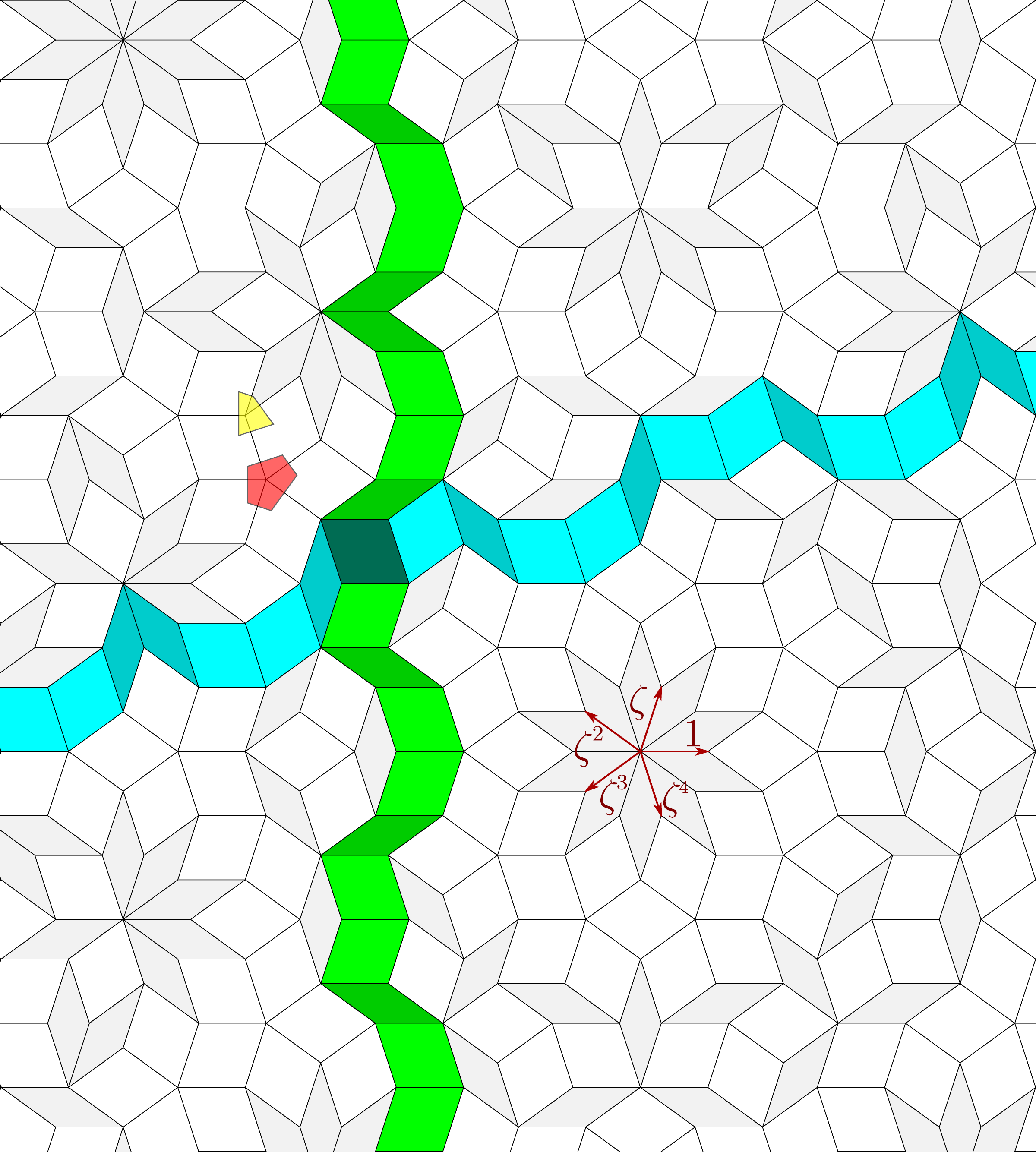}
    \caption{Dual rhombus tiling.}
    \end{subfigure}
  \caption{
    Example of a regular pentagrid and its dual tiling.
    Some elements of the multigrid and their dual in the tiling have been colored:
    in red and yellow two cells of the multigrid and the corresponding vertices in the tiling,
    and in blue and green two lines of the multigrid and the corresponding ribbons of tiles
    (each intersection of two grid lines corresponds to one tile).
  }
  \label{fig:multigrid_and_dual}
\end{figure}

Observe that $\dualization(\CC)$ is a countable set because of the ceiling operation in the defintion of $\dualization$.

\begin{lemma}[Regular multigrid dual tiling~\cite{db81,db86}]
  The dual tiling of a regular multigrid is an edge-to-edge rhombus tiling.
  \label{lemma:rhombus}
\end{lemma}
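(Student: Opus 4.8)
The plan is to realize de Bruijn's duality explicitly, reading the tiling off from the combinatorial structure of the line arrangement $\multigrid(\zeta,\gamma)$. I will call \emph{cells} the open connected components of the complement $\CC\setminus\multigrid(\zeta,\gamma)$. First I would show that the dualization function $\dualization$ is constant on each cell: writing $u_i(z):=z\cdot\zeta_i-\gamma_i$, the integer $\lceil u_i(z)\rceil$ can only change when $z$ crosses a line of the grid $\grid(\zeta_i,\gamma_i)$, which never happens inside a cell, so every summand, and hence $\dualization$, is locally constant on $\CC\setminus\multigrid(\zeta,\gamma)$. This assigns to each cell $c$ a single vertex $\dualization(c)\in V$.

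Second, I would analyze adjacencies. If two cells $c,c'$ share a boundary edge, that edge lies on a single line of some grid $\grid(\zeta_i,\gamma_i)$, and since its relative interior contains no intersection point it meets no other line; crossing it therefore changes exactly the $i$-th ceiling by $\pm1$ and fixes all others, so $\dualization(c')=\dualization(c)\pm\zeta_i$, which is precisely an edge of $E$. Next I would examine a single intersection point $p$: by regularity exactly two lines meet there, from grids $i\neq j$, cutting a neighbourhood of $p$ into four cells; applying the previous step twice shows their images are $v,\;v+\zeta_i,\;v+\zeta_j,\;v+\zeta_i+\zeta_j$ for some $v$, i.e.\ a unit rhombus with sides $\zeta_i,\zeta_j$ (non-degenerate, since the two lines through $p$ are non-parallel and $|\zeta_i|=|\zeta_j|=1$). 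This yields one rhombus $R_p$ per intersection point, proves every tile is a rhombus whose edges are among the $\zeta_i$, and (since rhombi at consecutive intersections along a grid line share a full edge) gives the edge-to-edge property.

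It remains to prove that the $R_p$ genuinely \emph{tile} the plane, and this I expect to be the main obstacle. The key local step is an angle count: each cell is a bounded convex polygon (bounded because it already sits inside the intersection of the two transversal strips between consecutive lines of any two non-parallel grids), and a direct check shows that the interior angle of a cell at an incident intersection point is the \emph{supplement} of the corresponding rhombus angle at $\dualization(c)$. Summing over the $k$ vertices of the convex cell, the rhombus angles around $\dualization(c)$ total $k\pi-(k-2)\pi=2\pi$, so the rhombi incident to each tiling vertex close up exactly and the assignment of the family $\{R_p\}$ is a local homeomorphism.

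To upgrade this local picture to a global tiling, equivalently to rule out overlaps and gaps (equivalently, injectivity of $\dualization$ on cells), I would invoke properness. Since $\lceil u_i\rceil-u_i\in[0,1)$, one has $\dualization(z)=Az+O(1)$, where $A$ is the real-linear map defined by $Az=\sum_i (z\cdot\zeta_i)\,\zeta_i$; this $A$ is positive definite precisely because the normals are not all parallel (as $z\cdot Az=\sum_i(z\cdot\zeta_i)^2$ vanishes only when $z$ is orthogonal to every $\zeta_i$). Hence the realization is a proper local homeomorphism onto $\CC$, i.e.\ a covering map of the simply connected plane, therefore a homeomorphism, which exhibits $\{R_p\}$ as an edge-to-edge rhombus tiling. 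The delicate point throughout is exactly this passage from the clean local structure to global injectivity, for which the concurrence-free (regular) hypothesis and the non-degeneracy of $A$ are precisely what is needed.
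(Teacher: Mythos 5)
The paper does not actually prove this lemma: it is stated with a citation to de Bruijn~\cite{db81,db86} and used as a black box, so there is no in-paper proof to compare against. Your reconstruction is the classical de Bruijn argument and is essentially correct: $\dualization$ is locally constant on the cells of the arrangement, crossing a cell edge on an $i$-line shifts the image by $\pm\zeta_i$, regularity makes each crossing a genuine rhombus $R_p$ spanned by two non-parallel unit vectors, and the edge-to-edge property comes from consecutive crossings along a line sharing a dual edge. The angle bookkeeping is also right: the cell with image $v$ at a crossing of an $i$-line and a $j$-line is the wedge $\{(z-p)\cdot\zeta_i<0,\ (z-p)\cdot\zeta_j<0\}$, whose opening is the supplement of the rhombus angle $\angle(\zeta_i,\zeta_j)$ at $v$, so the convex-polygon angle sum gives exactly $2\pi$ around each dual vertex. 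The one place you compress is the claim that the angle sum alone makes the rhombi ``close up'': strictly one also needs that consecutive crossings on the boundary of a cell yield rhombi sharing an edge at $\dualization(c)$ and appearing in consistent cyclic order (which holds because the shared dual edge is $\pm$ the normal of the corresponding cell edge, and these normals rotate monotonically around a convex cell); worth one more sentence but not a gap in the idea. Your finish via properness of $\dualization(z)=Az+O(1)$ with $A$ positive definite, turning the local homeomorphism into a covering of the simply connected plane, is a clean and correct way to get global injectivity (no gaps, no overlaps), and it is exactly the step that informal accounts of multigrid duality tend to omit.
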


In the following, we will mainly employ the multigrid setting to study the corona limit,
and eventually go to the dual tiling at the end.
Each intersection (crossing) of two multigrid lines corresponds to a tile.
See Figure~\ref{fig:multigrid_and_dual}.
Penrose rhombus tilings~\cite{p74,p79} are dual of regular multigrids (see Remark~\ref{remark:penrose}).

\begin{definition}[Types of crossings and tiles]
  We call \emph{$i$-line} or \emph{line of type $i$} a line of the grid $\grid(\zeta_i,\gamma_i)$.
  We call \emph{crossing of type $(i,j)$} the intersection point of a line of type $i$ with a line of type $j$.
  We call \emph{tile of type $(i,j)$} a tile that is the dual of a crossing of type $(i,j)$.
\end{definition}

Note that by definition a tile of type $(i,j)$ has edges along $\zeta_i$ and $\zeta_j$.
Given a tiling $\tiling$, denote $t\sim_\tiling t'$ when tiles $t,t'\in \tiling$ share an edge,
\emph{i.e.} they are adjacent in terms of the graph $\multigrid(\zeta,\gamma)$
(where each crossing is a vertex corresponding to a tile, and adjacencies are given by multigrid segments).
When the tiling is clear from the context,
we drop the subscript notation and simply denote $\sim$.
A \emph{patch} $P$ is a finite connected subset of tiles.
We denote $t\sim P$ when tile $t$ is adjacent to a tile of patch $P$ and $t\notin P$.


Given a patch $P$, we denote $\overbow{P}$ the polygon associated to patch $P$,
\emph{i.e.} the contour of $P$ as a collection of points linked by segments.
For a point $x\in\R^2$, we denote $x\in P$ when $x$ lies within a tile of $P$.
Given a two dimensional polygon $\Delta$, a scalar $\lambda\in\R^{+}$ and a point $x\in\R^2$,
we denote $\frac{\Delta_x}{\lambda}$ the polygon obtained by homothety of center $x$ and ratio $\lambda$.
We drop $x$ from the notation when $x$ is the origin.

\begin{definition}[Corona]
  Given an edge-to-edge tiling $\tiling$ and a finite patch $P$,
  we call the $n$-th corona of $P$ in $\tiling$ the patch of tiles that are at edge-distance at most $n$ from $P$.
  Formally, we define the sequence $(P_n)_{n\in\N}$ with $P_0=P$
  and $P_{n+1}=P_n\cup\{t\mid t\sim P_n\}$.
\end{definition}

Examples of corona are depicted on Figure~\ref{fig:corona_def}.
The corona limit corresponds to the normalized limit of the iterated coronas.

\begin{figure}
  \center
  \includegraphics[width=0.5\textwidth]{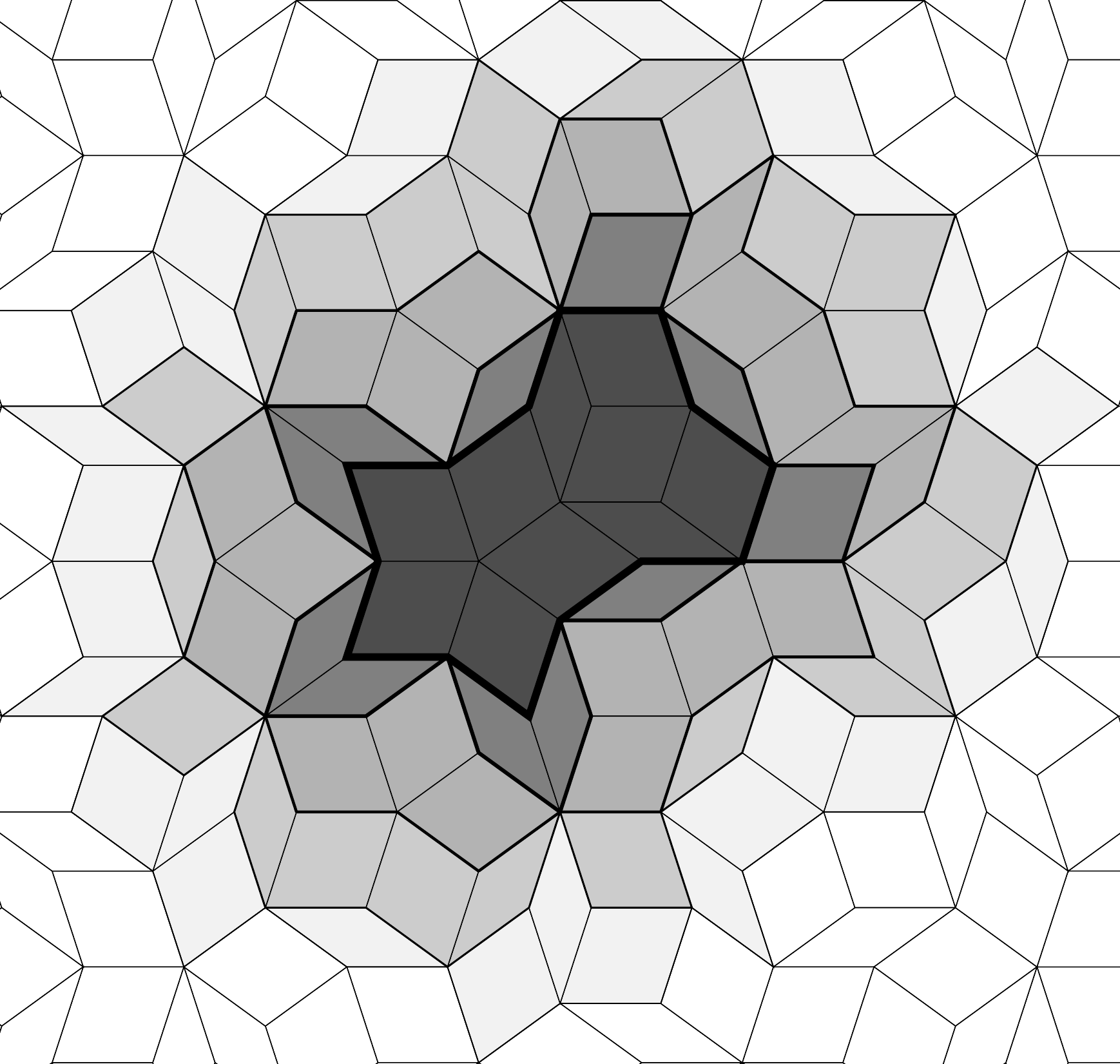}
  \caption{A patch of a Penrose tiling in dark grey and its 4 first coronas in greyscale.}
  \label{fig:corona_def}
\end{figure}

\begin{definition}[Corona limit]
  Polygon $\Delta$ is the \emph{corona limit} of patch $P$ in tiling $\tiling$
  when 
  \[
    \frac{\overbow{P_n}}{n} \tendsto{n} \Delta
  \]
  where $(P_n)_{n\in\N}$ is the corona sequence of $P$ and the convergence is for the Hausdorff metric.
  Polygon $\Delta$ is the \emph{uniform corona limit} of $\tiling$ when for any patch $P\subset \tiling$, $\Delta$ is the corona limit of $P$.
  \label{def:coronalimit}
\end{definition}

See Figure~\ref{fig:corona_example_big} for an illustration.

\begin{figure}
  \center
  \begin{subfigure}{0.3\textwidth}
    \includegraphics[width=\textwidth]{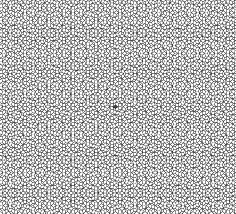}
    \caption{A single initial tile $P_0$.}
  \end{subfigure}
  ~
  \begin{subfigure}{0.3\textwidth}
    \includegraphics[width=\textwidth]{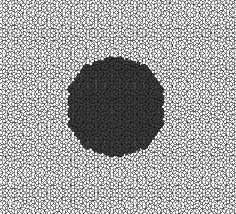}
    \caption{The 20th corona $P_{20}$.}
  \end{subfigure}
  ~
  \begin{subfigure}{0.3\textwidth}
    \includegraphics[width=\textwidth]{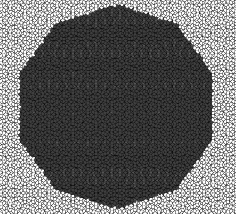}
    \caption{The 40-th corona $P_{40}$.}
  \end{subfigure}
  \caption{Successive coronas from an initial single tile in a Penrose tiling.
    The 40-th corona $P_{40}$ is already very close to a regular decagon which is the corona limit of Penrose tilings.}
  \label{fig:corona_example_big}
\end{figure}

\begin{figure}
  \center
  \includegraphics[width=0.6\textwidth]{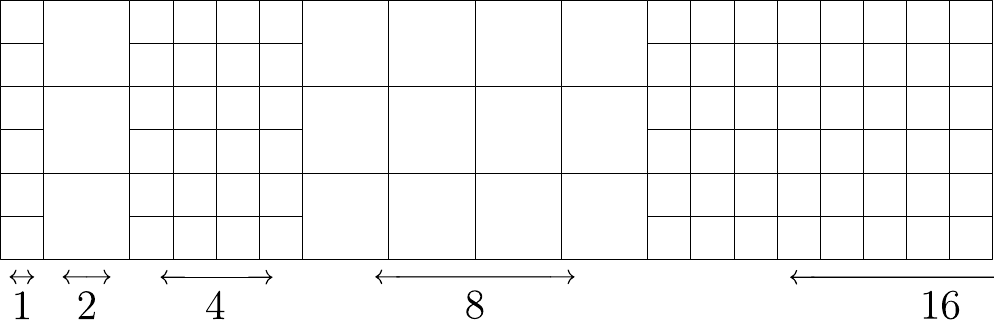}
  \caption{A simple tiling by $1\times1$ and $2\times 2$ square tiles where the corona limit does not exist.}
  \label{fig:no-corona}
\end{figure}


The uniformity condition (for any patch) can be restricted to single tiles,
as demonstrated in the following lemma.

\begin{lemma}[Corona limit of single tiles]
  Let $\Delta$ be a polygon and $\tiling$ a rhombus tiling.
  If for any single tile $t\in \tiling$, $\Delta$ is the corona limit of patch $\{t\}$,
  then $\Delta$ is the uniform corona limit of $\tiling$.
\end{lemma}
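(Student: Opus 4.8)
The plan is to reduce the claim for an arbitrary finite patch $P$ to the hypothesis for single tiles, by squeezing the corona of $P$ between two coronas of a single tile of $P$. First I would fix a tile $t_0\in P$ and set $D:=\diam(P)$, the (finite) edge-diameter of the patch, where $\distance$ denotes the edge-distance (graph distance for the adjacency relation $\sim$) between tiles. Writing $(Q_n)_{n\in\N}$ for the corona sequence of the single-tile patch $\{t_0\}$, so that $Q_n=\{t\in\tiling\mid\distance(t,t_0)\leq n\}$, I would establish the sandwich
\[
  Q_n\;\subseteq\;P_n\;\subseteq\;Q_{n+D}.
\]
The left inclusion holds because $t_0\in P$, so $\distance(t,t_0)\leq n$ implies $\distance(t,P)\leq n$; the right inclusion follows from the triangle inequality, since $\distance(t,P)\leq n$ gives some $t'\in P$ with $\distance(t,t')\leq n$, whence $\distance(t,t_0)\leq\distance(t,t')+\distance(t',t_0)\leq n+D$. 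These inclusions hold both as sets of tiles and as covered regions of the plane.

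Next I would control the renormalized outer term. By hypothesis $\frac{\overbow{Q_m}}{m}\tendsto{m}\Delta$, and writing $m=n+D$ I would rewrite
\[
  \frac{\overbow{Q_{n+D}}}{n}=\frac{n+D}{n}\cdot\frac{\overbow{Q_{n+D}}}{n+D}\tendsto{n}\Delta,
\]
using that the scalar $\frac{n+D}{n}\to 1$ while $\frac{\overbow{Q_{n+D}}}{n+D}\to\Delta$ and $\Delta$ is bounded; scaling uniformly bounded sets that converge to $\Delta$ by a factor tending to $1$ preserves the Hausdorff limit. Since also $\frac{\overbow{Q_n}}{n}\tendsto{n}\Delta$, both the inner and the outer term of the sandwich, after division by $n$, converge to the same polygon $\Delta$.

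Finally I would propagate the squeeze to $\frac{\overbow{P_n}}{n}$. The subtlety, and the main obstacle, is that the inclusions are between the filled patches, whereas the convergence in Definition~\ref{def:coronalimit} is stated for the contours $\overbow{\,\cdot\,}$; so the crux is to turn a squeeze of regions into a squeeze of boundaries. For one Hausdorff direction, every point of $\overbow{P_n}$ lies in $P_n\subseteq Q_{n+D}$ yet, being on the boundary of $P_n\supseteq Q_n$, cannot be interior to $Q_n$; hence it sits in the annular region between $\overbow{Q_n}$ and $\overbow{Q_{n+D}}$, which after division by $n$ is trapped between two sets both tending to $\Delta$ and therefore collapses onto $\Delta$. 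For the reverse direction, given $y\in\Delta$ I would choose points just inside and just outside the region bounded by $\Delta$ near $y$; for $n$ large the inner point lies in $Q_n\subseteq P_n$ and the outer point lies outside $Q_{n+D}\supseteq P_n$, so the segment joining them crosses $\frac{\overbow{P_n}}{n}$, placing a contour point near $y$. This passage is exactly where one must use that $\Delta$ (the characteristic polygon, convex with parallel opposite sides) bounds a genuinely two-dimensional region and that the coronas $Q_n,P_n$ are connected full blobs, so that contour convergence and region convergence agree. Granting this, the squeeze yields $\frac{\overbow{P_n}}{n}\tendsto{n}\Delta$; as $P$ was an arbitrary finite patch, $\Delta$ is the uniform corona limit of $\tiling$.
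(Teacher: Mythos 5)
Your proof is correct, but it takes a genuinely different route from the paper's. The paper decomposes $P=\bigcup_{i=0}^{m-1}\{t_i\}$ into its single tiles, observes that $P_n=\bigcup_{i=0}^{m-1}\{t_i\}_n$ (a tile is within edge-distance $n$ of $P$ iff it is within distance $n$ of some $t_i$), and concludes in one line from $\frac{\overbow{\{t_i\}_n}}{n}\tendsto{n}\Delta$ for each $i$ that the normalized contour of the finite union also tends to $\Delta$; it thus uses the hypothesis at every tile of $P$. You instead anchor everything at one tile $t_0\in P$, squeeze $Q_n\subseteq P_n\subseteq Q_{n+D}$ with $D=\diam(P)$, and absorb the shift $D$ through the renormalization $\frac{n+D}{n}\to 1$. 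Both are sound, and your sandwich inclusions are exactly right. What your version buys is that it makes explicit, and actually addresses, the one real subtlety that the paper's proof silently passes over: the inclusions (whether of unions or of sandwiched regions) hold between \emph{filled} patches, whereas Definition~\ref{def:coronalimit} asks for Hausdorff convergence of the \emph{contours} $\overbow{\,\cdot\,}$, and the contour of a union or of a squeezed region is not obtained by taking unions of contours. Your annulus argument for one Hausdorff direction and the inside/outside segment-crossing argument for the other handle this, at the cost of invoking that $\Delta$ bounds a genuinely two-dimensional region and that the coronas are connected blobs --- conditions that hold in the intended application ($\Delta=\dual{\polygon}$, a nondegenerate centrally symmetric convex polygon) and that the paper tacitly relies on as well. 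The price of your approach is length; the price of the paper's is that its final implication, from convergence of each $\frac{\overbow{\{t_i\}_n}}{n}$ to convergence of $\frac{\overbow{P_n}}{n}$, is asserted rather than argued and hides exactly the boundary-versus-region issue you isolate.
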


\begin{proof}
  Let $\tiling$ be an edge-to-edge rhombus tiling, and $\Delta$ a polygon such that for any single tile $t$ the corona limit of $\{t\}$ is $\Delta$.
  Let $P$ be a patch. Recall that a patch is a finite simply connected set of non overlapping tiles.
  Then $P = \bigcup_{i=0}^{m-1} \{t_i\}$ and for any $n$, $P_n = \bigcup_{i=0}^{m-1} \{t_i\}_n$.
  Since for any $i\in \{0,\dots,m-1\}$ we have $\frac{\overbow{\{t_i\}_n}}{n} \to \Delta$ we also have $\tfrac{\overbow{P_n}}{n} \to \Delta$.
\end{proof}

\begin{remark}
  The definition of corona limit behaves nicely on sufficiently uniform rhombus tilings such as the multigrid dual tilings,
  but may not exist in general.
  For example, consider a tiling by $1\times1$ and $2\times 2$ square tiles aranged in vertical strips of increasing width, see Figure \ref{fig:no-corona}.
  In such a tiling, the corona limit does not exist as the coronas spread with speed $1$ in the strips of $1\times 1$ squares and with speed $2$ in the strips of $2\times 2$ squares.
  So if both the $1$-strips and the $2$-strips have unbounded width there is no ``average growth speed'' and therefore no corona limit.
\end{remark}

In the case of rhombus tilings dual of regular multigrids,
corona limits always exist and the polygon can be computed from the normal vectors defining the multigrid.
We call it the characteristic polygon of the multigrid/tiling,
and state our main result (Theorem~\ref{theorem:corona}).

\begin{definition}[Characteristic polygon $\polygon$]
  Given a regular multigrid $\multigrid(\zeta,\gamma)$ and its dual tiling $\tiling(\zeta,\gamma)$.
  Let $\polygon$ be the $2d$-gone with vertices $\pm \polygon_i \zeta_i^\bot$ for $0\leq i < d$ with
  \[
    \polygon_i := \left(\sum\limits_{0\leq j < d}|\zeta_i^\bot \cdot \zeta_j| \right)^{-1}.
    \]

  Let $\dual{\polygon}$ be the $2d$-gone with vertices $\pm \dual{\polygon_i}$ for $0\leq i <d$ with
  \[
    \dual{\polygon_i}
    := \polygon_i\sum\limits_{0\leq j < d}(\zeta_i^\bot \cdot \zeta_j) \zeta_j
    = \frac{\sum\limits_{0\leq j < d}(\zeta_i^\bot \cdot\zeta_j) \zeta_j}{\sum\limits_{0\leq j < 0}|\zeta_i^\bot\cdot\zeta_j|}.
  \]

  We say that $\polygon$ and $\dual{\polygon}$ are the characteristic polygons
  of $\multigrid(\zeta,\gamma)$ and $\tiling(\zeta,\gamma)$, respectively.
  \label{def:polygon}
\end{definition}

The value $|\zeta_i^\bot\cdot\zeta_j|^{-1}$ is the distance along a $i$-line
between two intersection points with a $j$-line in the multigrid,
so $|\zeta_i^\bot\cdot\zeta_j|$ is the frequency of $(i,j)$ points on a $i$-line.
Hence $\polygon_i$ is the inverse of the sum of the frequencies of intersection points along a $i$-line,
\emph{i.e.}, $\polygon_i$ is the average distance between consecutive multigrid vertices along a $i$-line.
This explains why the growth of the corona limit along a $i$-line is $\polygon_i$.
In the definition of $\dual{\polygon_i}$, we can recognize a formula similar
to the dualization function $\dualization$ from Definition \ref{def:dualization},
up to the absence of floor $\lfloor~\rfloor$ and offsets $\gamma$.
Essentially, $\dual{\polygon_i} \approx \dualization(\polygon_i)$.

\begin{example}[Characteristic polygon for Penrose tilings]
  In the case of Penrose tilings the directions are the fifth roots of unit $\zeta_k = e^{\frac{2\imag k \pi}{5}}$ \cite{db81}.
  Therefore we have $\polygon_0 = \polygon_1 = \dots = \polygon_4 = \left(2\sin(\tfrac{2\pi}{5} + 2\sin(\tfrac{4\pi}{5})\right)^{-1} \approx 0.32$.
  This can be seen as the fact that in the pentagrid ($5$-fold multigrid),
  the average distance between consecutive intersection points along a line
  is the same in all directions, and is equal to $\polygon_0 \approx 0.32$. 
  From this we get that the vertices of $\polygon$ are $\pm \polygon_0 \zeta_k^\bot$ for $0\leq k < 5$, see Figure \ref{fig:charpolygon}.
  In particular, the characteristic polygon is a regular decagon.
  We obtain that
  $\dual{\polygon_0} = \polygon_0\sum_{0\leq j < d}(\zeta_0^\bot \cdot \zeta_j) \zeta_j = \polygon_0\imag(2\sin^2(\tfrac{2\pi}{5}) + 2\sin^2(\tfrac{4\pi}{5})) = \tfrac{5\polygon_0}{2}\imag \approx 0.81\imag$.
  Adapting this to all other directions, we get
  $\dual{\polygon_k} =\tfrac{5\polygon_0}{2}\zeta_k^\bot \approx 0.81\zeta_k^\bot$.
  Note that $\tfrac{5\polygon_0}{2}\approx 0.81$ is the expected width of rhombuses
  along a chain of rhombuses in a Penrose tiling,
  meaning the average width of fat and thin rhombuses weighted by their respective densities.

  \begin{figure}
    \centering
    \begin{subfigure}{0.45\textwidth}
      \includegraphics[width=\textwidth]{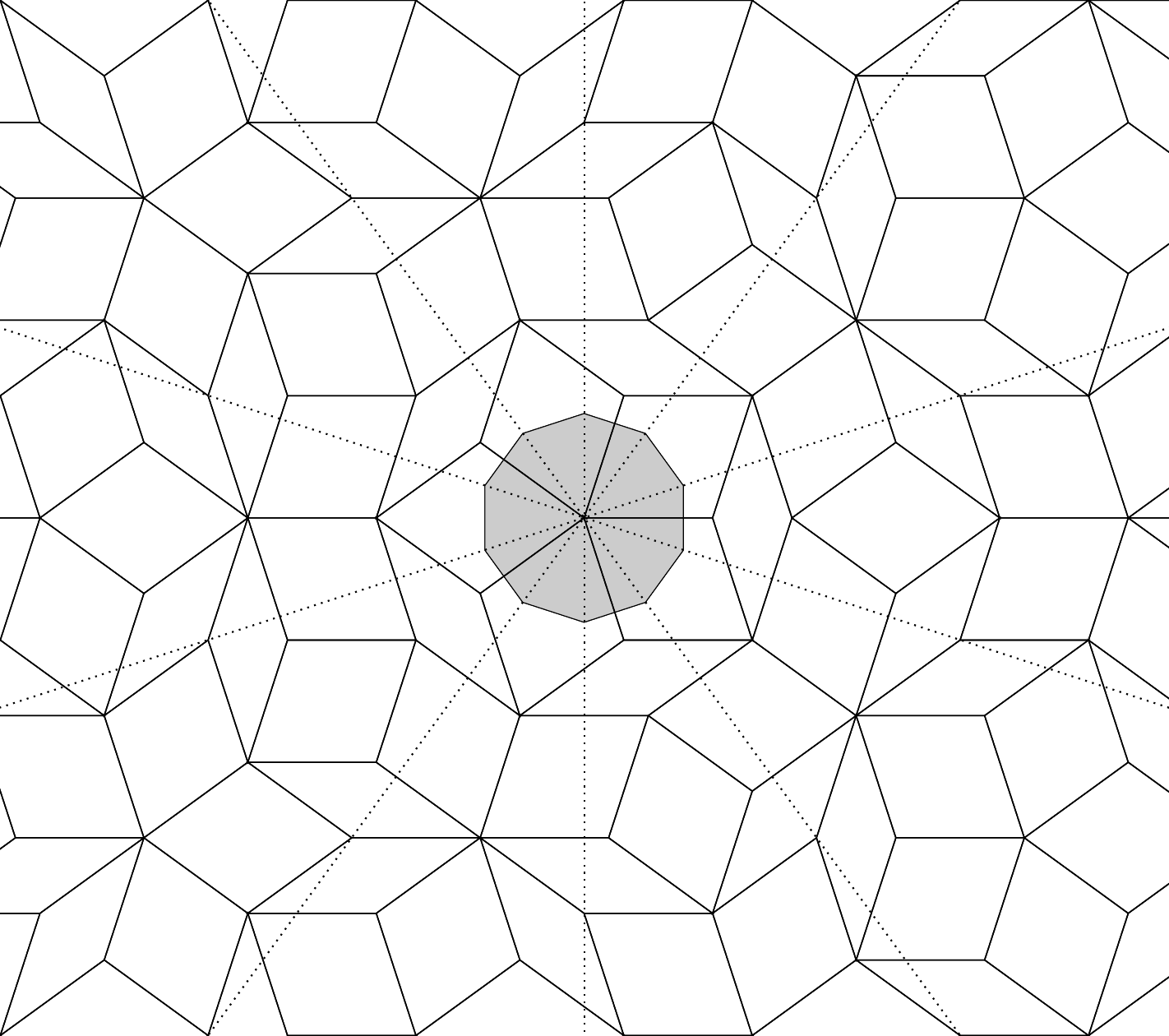}
    \end{subfigure}
    \hspace*{.5cm}
    \begin{subfigure}{0.45\textwidth}
      \center
      \includegraphics[width=0.9\textwidth]{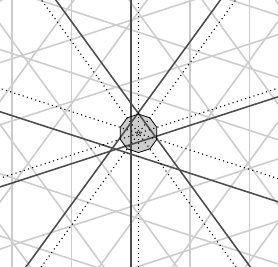}
    \end{subfigure}
    \caption{
      Characteristic polygons on Penrose tiling (left) and pentagrid (right).
      In dotted, the lines of orthogonal vector $\zeta_k$ passing through the origin.
    } 
    \label{fig:charpolygon}
  \end{figure}

\end{example}

Note that the characteristic $2d$-gones of multigrids and their dual tilings
have parallel opposite sides.
We can now state our main result and some observations.

\begin{theorem}[Corona limit]
  The corona limit of any multigrid dual tiling is its characteristic polygon.
  In particular:
  \begin{itemize}
  \item the corona limit only depends on the directions of the grid and not on its offsets,
  \item the corona limit of a $n$-fold multigrid dual tiling is a regular $2n$-gone.
  \end{itemize}
  \label{theorem:corona}
\end{theorem}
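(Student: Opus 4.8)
The plan is to follow the three-step strategy announced above, working first in the multigrid and transferring to the dual tiling only at the very end. The crucial observation is that, by the edge structure of Definition~\ref{def:dualization} (an edge joins $z$ to $z+\zeta_i$), each corona step changes exactly one coordinate of the lift $\kappa(z) := (\lceil z\cdot\zeta_i-\gamma_i\rceil)_{0\le i<d}\in\Z^d$ by $\pm1$; equivalently $\dualization(z)=\sum_i\kappa_i(z)\zeta_i$ and adjacent tiles differ by a single $\pm\zeta_i$. Consequently, since one step moves the lift by a unit vector $\pm e_i$, the edge-distance between two tiles is bounded below by the $\ell_1$-distance $\sum_i|\kappa_i-\kappa_i'|$ of their lifts. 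As $\kappa_i(z)=z\cdot\zeta_i+O(1)$, this shows that the $n$-th corona of a tile at multigrid-position $x_0$ is contained, up to a bounded error, in $\{x:\sum_i|(x-x_0)\cdot\zeta_i|\le n\}$. After normalization by $n$ this converges to the norm ball $B:=\{x:\sum_i|x\cdot\zeta_i|\le 1\}$, giving the inclusion (corona limit) $\subseteq B$ in multigrid coordinates.

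The first step then pins down the extreme points of the growth. I would show that the corona spreads along an $i$-line at asymptotic speed $\polygon_i$ in the direction $\zeta_i^\bot$: consecutive crossings on an $i$-line occur with asymptotic density $\sum_j|\zeta_i^\bot\cdot\zeta_j|$ per unit length, so $n$ consecutive steps along such a line cover arc-length $n\polygon_i+o(n)$, reaching $n\polygon_i\zeta_i^\bot+o(n)$. This is where the regularity and quasiperiodicity of the multigrid are essential, as they guarantee that the empirical spacing of crossings converges uniformly to its mean; I expect this averaging argument to be the main obstacle, since it must be uniform in the starting tile and robust to the bounded fluctuations of the $\lceil\cdot\rceil$ lift. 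The second step combines these directional speeds: the corona limit is the normalized limit of metric balls and is therefore convex (by subadditivity of the edge-distance along the quasi-uniform multigrid), it contains each $\pm\polygon_i\zeta_i^\bot$ by the first step, and it is contained in $B$ by the lower bound above. Since a direct computation gives $\sum_j|\polygon_i\zeta_i^\bot\cdot\zeta_j|=\polygon_i\,\polygon_i^{-1}=1$, and the kinks of the $\ell_1$-type norm $\sum_j|x\cdot\zeta_j|$ occur precisely in the directions $\pm\zeta_j^\bot$ where a term $x\cdot\zeta_j$ changes sign, the points $\pm\polygon_i\zeta_i^\bot$ are exactly the $2d$ vertices of $B$. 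Hence $B=\polygon$ and the corona limit in the multigrid is exactly the characteristic polygon $\polygon$.

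For the third step I would transfer this to the dual tiling through the dualization. Writing $L(x):=\sum_i(x\cdot\zeta_i)\zeta_i$, which is $\R$-linear, Definition~\ref{def:dualization} gives $\dualization(x)=L(x)+O(1)$, the error coming only from the ceilings and the offsets $\gamma$; hence $L$ carries the multigrid corona limit to the dual-tiling corona limit, the bounded error vanishing after division by $n$ in the Hausdorff metric. Since $L(\polygon_i\zeta_i^\bot)=\polygon_i\sum_j(\zeta_i^\bot\cdot\zeta_j)\zeta_j=\dual{\polygon_i}$, the vertices of $\polygon$ map exactly to those of $\dual{\polygon}$, so the corona limit of $\tiling(\zeta,\gamma)$ is $\dual{\polygon}$, which is the main statement. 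The two particular cases then follow at once: the offsets $\gamma$ enter only through the bounded $O(1)$ term and so do not affect the limit; and for an $n$-fold multigrid the roots of unity $\zeta_k=e^{2\imag\pi k/n}$ are permuted by the rotation of angle $2\pi/n$, which forces all $\polygon_i$ to coincide and makes both $B$ and $L$ equivariant, so that $\polygon$ and $\dual{\polygon}$ are regular $2n$-gones (using that $n$ is odd to fill in all multiples of $\pi/n$ among the vertex angles).
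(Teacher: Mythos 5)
Your architecture (speeds along individual lines, then the full multigrid, then transfer to the tiling via the linear map $\lindual(x)=\sum_i(x\cdot\zeta_i)\zeta_i$ with an $O(1)$ error) is the same as the paper's, and your identification of $\polygon$ as the unit ball of the norm $x\mapsto\sum_j|x\cdot\zeta_j|$ is a genuinely nice reformulation the paper does not make explicit. But the two clauses carrying your outer and inner bounds are exactly where the real work lives, and neither holds as written. For the outer bound, the claim that ``each corona step changes exactly one coordinate of the lift $(\lceil z\cdot\zeta_i-\gamma_i\rceil)_i$ by $\pm1$'' is false at crossings: moving from a crossing of type $(i,j)$ to the adjacent crossing of type $(i,k)$ along their common $i$-line fixes the $i$-th coordinate but can change \emph{both} the $j$-th and $k$-th ceiling coordinates (you step off one line and onto another, and each event can move a ceiling by one). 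A length-$n$ path then only gives $\ell_1$-displacement at most $2n$, so your outer bound degrades to $2B$ and no longer meets the inner bound. The correct statement --- graph distance is, up to an additive constant, at least the number of grid lines strictly separating the two crossings --- is true, but it needs the injectivity argument of the paper's regular-multigrid-distance lemmas (each separating line is charged to the first path vertex it contains; two lines cannot be charged to the same vertex because the preceding edge would then lie along one of them). Some such argument is indispensable to recover the factor $1$.

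For the inner bound, you deduce that the limit contains all of $B$ from containment of the vertices $\pm\polygon_i\zeta_i^\bot$ plus ``the limit of metric balls is convex by subadditivity.'' The triangle inequality alone does not make rescaled balls of a graph metric converge to a convex set; you need approximate translation invariance, i.e., that from the vertex reached after $\lambda n$ steps along an $i$-line one can still travel at speed $\polygon_j$ along a nearby $j$-line, that the concatenated two-segment path is (close to) geodesic, and that every point of $n\polygon$ lies within bounded distance of a multigrid vertex reachable this way. This is precisely the content of the paper's two-lines approximant lemma combined with its relative-density lemma, and it is where the paper rightly places the main technical difficulty --- not in the single-line averaging, which is an exact floor-function computation with error at most $2$ and requires no appeal to quasiperiodicity. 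Your transfer step and the two bulleted consequences are fine as sketched.
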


\begin{remark}[Pentagrids and Penrose tilings]
  Penrose rhombus tilings can be defined as the pentagrid dual tilings (or $5$-fold dual tilings)
  where the sum of the offsets is 0 modulo 1 \cite{db81,r96}.
  The corona limit of Penrose tilings is known to be a regular decagon \cite{ai16}.
  Our result proves that it is also the case for all pentagrid dual tilings even when the sum of the offsets is not 1.
  In particular, it also holds for the Antipenrose tilings where the sum of the offsets is $1/2$. 
  \label{remark:penrose}
\end{remark}

\begin{remark}[In cut-and-project terms]
  Multigrid dual tilings are cut-and-project tilings \cite{db86,gr86}.
  The result can be reformulated as the fact that the corona limit of a multigrid dual tiling
  depends only on its cut-and-project slope, and not on its cut-and-project intercept.
  For more insight on this result through the lens of cut-and-projection see \cite{dhf22},
  in particular the characteristic polygon $\dual{\chi}$ is in this context the projection
  of the intersection of the cut-and-project slope of the tiling with the orthoplex (the polyhedron
  with vertices $\pm e_i$ where $(e_i)_{0\leq i < d}$ is the canonical basis of the higher dimensional space).
\end{remark}

%

\section{Polygonal corona limit on multigrid dual tilings}
\label{s:corona}

For technical reasons we will first consider the corona limit on the multigrid itself
in Subsections~\ref{subsec:endpoints} (along individual lines of the multigrid)
and~\ref{subsec:corona-multigrid} (on the full multigrid),
before returning to the dual tiling in Subsection \ref{subsec:corona-tiling}
(thanks to the fact that up to a linear map the multigrid intersection points are close,
in terms of Euclidean distance, to their corresponding tile).
Indeed, we take advantage of the regularities within the multigrid, before transferring
the result to the tiling (which may have a non trivial aperiodic structure).
The characteristic polygon $\polygon$ (or $\dual{\polygon}$) embeds this reasoning:
its main directions of propagation are along individual lines (or ribbons),
and other multigrid intersection points (or other tiles)
adhere to this movement (this is the main technical difficulty).

We consider here the multigrid as a graph: the vertices are the crossing points of the grid lines
and two vertices are connected by an edge if they are two consecutive crossing points along a grid line.
Let us redefine some of the concepts in this setting.
We call \emph{patch} a finite and connected set of multigrid vertices,
and $n$-th \emph{corona} of a patch the set of multigrid vertices that are at distance at most $n$ from the patch.
To differenciate them from tiling patches, we denote $C$ a multigrid patch and
$(C_n)_{n\in\N}$ the corona sequence starting at $C_0=C$.

\begin{remark}
  Note that, even if $C_0$ is simply connected,
  it is not always the case that $C_i$ is simply-connected for all $i$.
  Consider for example a crescent shaped $C_0$.
  However, there always exists some $N$ such that for any
  $i\geq N$, the patch $C_i$ is simply-connected.
  \label{remark:patch_connected}
\end{remark}

For the rest of this section we fix an integer $d$,
a tuple of directions $\zeta = (\zeta_i)_{0\leq i < d}$
and a tuple of offsets $\gamma = (\gamma_i)_{0\leq i < d}$.
That is, we fix a regular multigrid $\multigrid(\zeta,\gamma)$
and dual tiling $\tiling(\zeta,\gamma)$.

\subsection{Endpoints in the multigrid}
\label{subsec:endpoints}

We start the study of corona limits by considering their behavior along multigrid lines taken individually.
For this purpose, we select one line in each direction given by normal vectors $\zeta$.

\begin{definition}[Dominant line]
  In the multigrid $\multigrid$, we choose one line of each type,
  called \emph{dominant line}, and denote them $\dominantlines=(L_i)_{0\leq i <d}$.
\end{definition}

The intuition is that we take for each direction the multigrid line closest to the origin, and we study the corona sequence of a patch $C$ close to the origin, the typical case being a patch that intersects all the dominant lines.
Remark that for any initial multigrid patch $C_0$, there exist a $k$ such that the $k$-th corona $C_k$ intersects all dominant lines. This holds because each dominant line is at finite graph distance to $C_0$.
So without loss of generality, we assume in the rest of this section that $C_0$ intersects all dominant lines.

\begin{definition}[Endpoints $E_n$ and polygonal hull $\overbow{E_n}$.]
  Given an initial patch $C$,
  define the endpoints $E_n = (e_{n,i}^+, e_{n,i}^-)_{0\leq i < d}$,
  with $e_{n,i}^+$ and $e_{n,i}^-$ the intersection points at graph distance $n$
  from the initial patch $C$ along the dominant line $L_i$ in both directions.
  Define the polygonal hull $\overbow{E_n}$ as the polygon with vertices $\{e_{n,i}^+, e_{n,i}^-\}$.
\end{definition}

\begin{remark}
  For large enough $n$ the $e_{n,i}$ are distinct and the polygonal hull $\overbow{E_n}$ is a $2d$-gone.
\end{remark}

Note that, since the multigrid has been chosen to be regular (no three lines intersect),
the $n$-th vertex along a line from a starting point is always well defined.
However, it might be hard to visualize as some intersection points might be arbitrarily close.

\begin{figure}
  \center
  \includegraphics[width=0.5\textwidth]{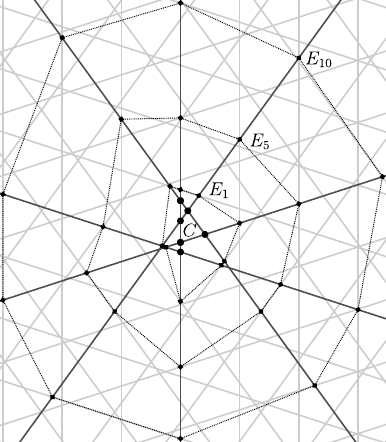}
  \caption{
    A pentagrid, a patch $C$ in round dots, and a selection of dominant lines 
    in darker grey.
    Endpoints $E_1$, $E_5$, $E_{10}$ for a direction are depicted,
    with the corresponding polygonal hull in dotted lines.
  }
  \label{fig:endpoints}
\end{figure}

The following lemma is illustrated on Figure~\ref{fig:frequency_crossing}.
It introduces a notation to consider the number of intersections between
one line in direction $\zeta_i$ and the grid (collection of parallel lines) in direction $\zeta_j$.

\begin{lemma}[Intersection points along a grid line]
  Given a line $\ell$ of normal direction $\zeta_i$, a point $z\in \ell$ and a positive real number $\alpha$,
  denote $c_j(z,z+\alpha\zeta_i^\bot)$ the number of crossings of type $(i,j)$
  along $\ell$ between $z$ and $z+\alpha\zeta_i^\bot$.
  It is almost $\alpha$ times the absolute scalar product $|\zeta_i^\bot\cdot\zeta_j|$.
  More precisely, we have 
  \[
    \left|c_j(z,z+\alpha\zeta_i^\bot) - \alpha|\zeta_i^\bot\cdot\zeta_j| \right| \leq 2
  \]
  and in particular
  \[
    \frac{c_j(z,z+\alpha\zeta_i^\bot)}{\alpha}
    \underset{\alpha\to+\infty}{\longrightarrow}
    |\zeta_i^\bot\cdot\zeta_j|.
  \]
  \label{lemma:frequency_crossing}
\end{lemma}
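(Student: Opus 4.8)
The plan is to count crossings of type $(i,j)$ along the segment of $\ell$ from $z$ to $z+\alpha\zeta_i^\bot$ by parametrizing the points of $\ell$ and reducing the count to a lattice-spacing computation for the grid $\grid(\zeta_j,\gamma_j)$. First I would observe that the grid $\grid(\zeta_j,\gamma_j)$ consists of all points $w$ with $w\cdot\zeta_j-\gamma_j\in\Z$, so a crossing of $\ell$ with a $j$-line occurs exactly at those $w\in\ell$ for which $w\cdot\zeta_j-\gamma_j$ is an integer. Writing a generic point of the traversed segment as $w(t)=z+t\zeta_i^\bot$ for $t\in[0,\alpha]$, the quantity controlling crossings is the affine function $g(t):=w(t)\cdot\zeta_j-\gamma_j = (z\cdot\zeta_j-\gamma_j)+t\,(\zeta_i^\bot\cdot\zeta_j)$. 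A crossing of type $(i,j)$ corresponds precisely to a value of $t\in[0,\alpha]$ at which $g(t)\in\Z$.

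Next I would note that $g$ is affine in $t$ with slope $\zeta_i^\bot\cdot\zeta_j$, so as $t$ runs over the interval of length $\alpha$, the value $g(t)$ sweeps a real interval of length exactly $\alpha\,|\zeta_i^\bot\cdot\zeta_j|$ (this is where regularity matters: if $\zeta_i^\bot\cdot\zeta_j=0$ the two grids are parallel and contribute no crossings, which is consistent with the stated bound). The number of integers met by a monotone affine sweep over an interval of length $\lambda$ is within $1$ of $\lambda$; more precisely the count of integers in a real interval of length $\lambda$ lies in $[\lfloor\lambda\rfloor,\lceil\lambda\rceil+1]$ depending on the endpoints, so it differs from $\lambda$ by less than $2$. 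Applying this with $\lambda=\alpha\,|\zeta_i^\bot\cdot\zeta_j|$ yields
\[
  \bigl|c_j(z,z+\alpha\zeta_i^\bot) - \alpha\,|\zeta_i^\bot\cdot\zeta_j|\bigr| \leq 2,
\]
which is the first claim. The limit statement then follows immediately by dividing by $\alpha$ and letting $\alpha\to+\infty$, since the error term is bounded by the constant $2$ while the main term grows linearly.

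I do not expect a serious obstacle here; the lemma is essentially the elementary fact that the number of integer points in an interval is approximately its length. The only points requiring mild care are bookkeeping ones: making the crossing-to-integer correspondence rigorous (using regularity so that each integer value of $g$ gives a genuine distinct crossing and no coincidences with other grids inflate the count), and getting the constant in the error bound right, which depends on how the endpoints $z$ and $z+\alpha\zeta_i^\bot$ sit relative to the grid lines. The slightly loose bound of $2$ (rather than $1$) comfortably absorbs these endpoint effects, so I would keep the bound as stated rather than optimize it.
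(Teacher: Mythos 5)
Your proposal is correct and follows essentially the same route as the paper: both reduce the crossing count to the number of integer values taken by the affine function $t\mapsto (z+t\zeta_i^\bot)\cdot\zeta_j-\gamma_j$ on $[0,\alpha]$, which the paper expresses as a difference of floors and you express as counting integers in the swept interval. The endpoint bookkeeping and the resulting error bound of $2$ match the paper's argument.
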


\begin{figure}
  \center
  \includegraphics[width=0.5\textwidth]{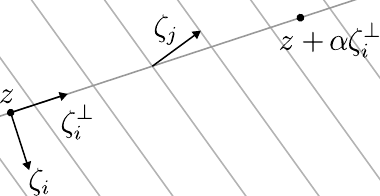}
  \caption{Crossings of type $(i,j)$ along a $i$-line.}
  \label{fig:frequency_crossing}
\end{figure}

\begin{figure}
  \center
  \includegraphics[width=0.5\textwidth]{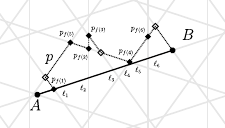}
  \caption{
    Two vertices $A$ and $B$ along a grid line.
    If the multigrid is regular then a shortest path between $A$ and $B$ is along the segment $AB$.
  }
  \label{fig:triangular_inequality_1}
\end{figure}

\begin{proof}
  The general idea is that the distance along a $i$-line between two consecutive $(i,j)$ crossings is precisely $(\zeta_i^\bot \cdot\zeta_j)^{-1}$.
  So along a $i$-line, on a segment of length $\alpha$, there are about $\alpha|\zeta_i^\bot \cdot \zeta_j|$ crossings of type $(i,j)$.

  Formally, an intersection of type $(i,j)$ is a complex $z$ such that $(z\cdot \zeta_j + \gamma_j) \in \mathbb{Z}$.
  So by definition, $c_j(z,z+\alpha\zeta_i^\bot)$ is precisely the number of complex numbers of the form $z + x\zeta_i^\bot$ with $x\in [0,\alpha]$ and such that $((z+x\zeta_j^\bot)\cdot \zeta_j + \gamma_j) \in \mathbb{Z}$.
  
  So we can reformulate it as 
  \[
    c_j(z,z+\alpha\zeta_i^\bot) =
    \left|\lfloor z\cdot\zeta_j+\gamma_j\rfloor - \lfloor(z+\alpha\zeta_i^\bot)\cdot\zeta_j + \gamma_j\rfloor \right|.
  \]
  From this we get the expected result with $\left|\lfloor x \rfloor - x \right| \leq 1$.
\end{proof}

The next corollaries are immediate consequences.

\begin{corollary}[Intersection points with all grids along a grid line]
  Given a line $\ell$ of normal direction $\zeta_i$, a point $z\in \ell$ and a positive real number $\alpha$,
  denote $c(z,z+\alpha\zeta_i^\bot)$ the tuple of the number of crossings of type $(i,j)$
  along $\ell$ between $z$ and $z+\alpha\zeta_i^\bot$ for all $j\neq i$, \emph{i.e.},
  \[
    c(z,z+\alpha\zeta_i^\bot) := \left( c_j(z,z+\alpha\zeta_i^\bot) \right)_{j\neq i}.
  \]
  Then $c(z,z+\alpha\zeta_i^\bot)$ is almost $\alpha$ times the tuple of (absolute) scalar products
  $|\zeta_i^\bot\cdot\zeta_j|$.
  More precisely, we have 
  \[
    \left\|c(z,z+\alpha\zeta_i^\bot) - \alpha\left(|\zeta_i^\bot\cdot\zeta_j|\right)_{j\neq i} \right\|_\infty \leq 2
  \]
  and in particular
  \[
    \frac{c(z,z+\alpha\zeta_i^\bot)}{\alpha} \underset{\alpha\to+\infty}{\longrightarrow} \left(|\zeta_i^\bot\cdot\zeta_j|\right)_{j\neq i}.
  \]
  \label{corollary:nth_vertex}
\end{corollary}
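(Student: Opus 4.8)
The plan is to reduce the statement directly to Lemma~\ref{lemma:frequency_crossing}, applied coordinate by coordinate. First I would recall that for a finite real vector $v = (v_j)_{j \neq i}$ the infinity norm is simply $\|v\|_\infty = \max_{j \neq i} |v_j|$. Setting $v_j := c_j(z,z+\alpha\zeta_i^\bot) - \alpha|\zeta_i^\bot\cdot\zeta_j|$, the quantity to be bounded in the corollary is by definition exactly $\|v\|_\infty = \max_{j \neq i} |v_j|$.

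The key step is that Lemma~\ref{lemma:frequency_crossing} already gives $|v_j| \leq 2$ for every individual $j \neq i$, with the same constant $2$ independent of $j$ (and of $z$ and $\alpha$). Since the maximum of finitely many quantities each bounded by $2$ is itself bounded by $2$, I obtain $\|v\|_\infty \leq 2$, which is precisely the claimed inequality. The point worth stressing is that the bound does \emph{not} degrade with the number $d-1$ of coordinates, and this is exactly because the statement is phrased with the infinity norm rather than, say, the $1$-norm, where one would only get the weaker bound $2(d-1)$.

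For the limit I would divide through by $\alpha$: each coordinate satisfies $c_j(z,z+\alpha\zeta_i^\bot)/\alpha \to |\zeta_i^\bot\cdot\zeta_j|$ as $\alpha \to +\infty$ by the second part of Lemma~\ref{lemma:frequency_crossing}, and coordinatewise convergence of a tuple living in a fixed finite-dimensional space is equivalent to convergence in the infinity norm. Hence the whole tuple $c(z,z+\alpha\zeta_i^\bot)/\alpha$ converges to $(|\zeta_i^\bot\cdot\zeta_j|)_{j \neq i}$, as required.

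There is essentially no obstacle here: the corollary is just a repackaging of the lemma into vector form, and the only subtlety worth recording explicitly is the choice of the infinity norm, which is what lets the per-coordinate constant $2$ survive unchanged into the bound on the whole tuple.
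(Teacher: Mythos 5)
Your proof is correct and matches the paper's intent: the paper states this corollary as an ``immediate consequence'' of Lemma~\ref{lemma:frequency_crossing} without writing out details, and your coordinatewise application of that lemma followed by taking the maximum (and the equivalence of coordinatewise and $\|\cdot\|_\infty$ convergence in finite dimensions) is exactly the intended argument.
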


Recall that $\polygon_i=\left(\sum\limits_{0\leq j <d, j\neq i} |\zeta_i^\bot\cdot\zeta_j|\right)^{-1}$.

If we denote $n_c(\alpha)$ the sum of the $c_j(z, z+\alpha\zeta_i^\bot)$ for $j\neq i$,
we have $\frac{n_c(\alpha)}{\alpha}  \underset{\alpha\to+\infty}{\longrightarrow} \sum\limits_{j\neq i} |\zeta_i^\bot\cdot\zeta_j| = \polygon_i^{-1}$.
Note that since we inverse the role of $\alpha$ and $n$, we go from $n \approx \alpha \polygon_i^{-1}$ to $\alpha \approx n\polygon_i$. 
This is formalized in the next corollary.

\begin{corollary}$n$-th intersection point with a grid line]
  Given a line $\ell$ of normal direction $\zeta_i$ and a point $z\in \ell$,
  denote $z_n= z + \alpha_n\zeta_i^\bot$ the $n$-th intersection point along $\ell$ in direction $\zeta_i^\bot$.
  There exists a constant $\delta_0$ (depending only on the multigrid order and directions) such that for any $n\in\N$,
  \[
    \left| \alpha_n - n\polygon_i \right| \leq \delta_0
  \]
  and in particular
  \[
    \frac{\alpha_n}{n} \underset{n\to+\infty}{\longrightarrow} \polygon_i.
  \]
  \label{cor:nth_intersection}
\end{corollary}

We can now characterize the limit shape of $\overbow{E_n}$
in terms of the characteristic polygon of the multigrid $\multigrid(\zeta,\gamma)$.

\begin{proposition}[The limit shape of $\overbow{E_n}$ is $\polygon$]
  Let $C$ be a finite initial patch.
  There exists $\delta_1\in\mathbb{R}^+$ such that for any $n\in \N$,
  \[
    (n-\delta_1)\polygon \subseteq \overbow{E_n} \subseteq (n+\delta_1)\polygon
  \]
  and in particular, the limit shape of $\overbow{E_n}$ is the characteristic polygon $\polygon$.
  \label{prop:endpoints_limit_shape}
\end{proposition}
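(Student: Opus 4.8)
The plan is to turn the endpoint estimate of Corollary~\ref{cor:nth_intersection} into the stated sandwich by a short piece of convex geometry. First I would locate the endpoints precisely. Since $C$ is finite and (by the reduction above) meets every dominant line $L_i$, let $p_i^+$ and $p_i^-$ be the extreme crossings of $C$ on $L_i$ in the directions $+\zeta_i^\bot$ and $-\zeta_i^\bot$; they all lie in some ball $B(0,R)$ with $R$ depending only on $C$. By definition $e_{n,i}^+$ is the $n$-th crossing beyond $p_i^+$ along $L_i$, so Corollary~\ref{cor:nth_intersection} (applied with $z=p_i^+$) gives $e_{n,i}^+ = p_i^+ + \alpha_n\zeta_i^\bot$ with $|\alpha_n - n\polygon_i|\le\delta_0$, whence
\[
  \bigl|\, e_{n,i}^+ - n\polygon_i\zeta_i^\bot \,\bigr| \le R + \delta_0 =: \delta,
\]
and symmetrically $|e_{n,i}^- + n\polygon_i\zeta_i^\bot|\le\delta$. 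Thus every vertex of $\overbow{E_n}$ lies within $\delta$ of the corresponding vertex $\pm n\polygon_i\zeta_i^\bot$ of the dilate $n\polygon$.

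Next I would record the convexity facts that let one pass from \emph{close vertices} to \emph{nested polygons}. The polygon $\polygon$ is convex with the origin in its interior: indeed $\polygon_i^{-1}=\sum_j|\zeta_i^\bot\cdot\zeta_j|$ is the value at $\zeta_i^\bot$ of the support function $h_Z$ of the centrally symmetric zonotope $Z=\sum_j[-\zeta_j,\zeta_j]$, so the points $\pm\polygon_i\zeta_i^\bot=\pm\zeta_i^\bot/h_Z(\zeta_i^\bot)$ are exactly the vertices of the polar dual $Z^\circ$. Hence $\polygon=Z^\circ$ is a centrally symmetric convex $2d$-gon and contains a ball $B(0,r)$ with $r>0$. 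Moreover, since $e_{n,i}^\pm/n \to \pm\polygon_i\zeta_i^\bot$ and the latter are in strictly convex position, there is an $N_0$ such that for all $n\ge N_0$ the vertices of $\overbow{E_n}$ are themselves in convex position in the same cyclic order, so $\overbow{E_n}$ is a convex $2d$-gon.

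The sandwich then follows from a comparison of support functions. Write $h_n$ for the support function of $\overbow{E_n}$ and $h_\polygon$ for that of $\polygon$; for a convex polygon the support function in a unit direction $u$ equals the maximum of $x\cdot u$ over the vertices, so from the elementary bound $|\max_k a_k-\max_k b_k|\le\max_k|a_k-b_k|$ and the vertex estimate above we get, for $n\ge N_0$ and every unit $u$,
\[
  \bigl|\, h_n(u) - n\,h_\polygon(u) \,\bigr| \le \delta .
\]
Because $h_\polygon(u)\ge r$, the additive error can be absorbed multiplicatively: $n\,h_\polygon(u)\pm\delta=\bigl(n\pm\delta/h_\polygon(u)\bigr)h_\polygon(u)$ with $\delta/h_\polygon(u)\le\delta/r$, and since inclusion of convex bodies is equivalent to domination of support functions this yields $(n-\delta/r)\polygon\subseteq\overbow{E_n}\subseteq(n+\delta/r)\polygon$. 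This settles all $n\ge N_0$ with $\delta_1=\delta/r$. For the finitely many $n<N_0$ the factor $n-\delta_1$ can be made $\le 0$ (lower bound trivial) and, each $\overbow{E_n}$ being bounded, enlarging $\delta_1$ makes $(n+\delta_1)\polygon$ contain it; a single large enough $\delta_1$ thus works for every $n$.

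The substantive ingredient is the first paragraph, \emph{i.e.}\ Corollary~\ref{cor:nth_intersection}; everything afterwards is convex-geometry bookkeeping. I expect the main obstacle to be exactly this bookkeeping step --- passing from a vertexwise $\delta$-estimate to an inclusion of \emph{solid} polygons --- because it secretly requires convexity of both $\polygon$ (obtained cleanly from the polar-dual identity $\polygon=Z^\circ$) and of $\overbow{E_n}$ for large $n$, and because the inner inclusion genuinely fails for small $n$ before the hull becomes convex, forcing the separate treatment above. Finally, dividing the sandwich by $n$ gives $(1-\delta_1/n)\polygon\subseteq \overbow{E_n}/n\subseteq(1+\delta_1/n)\polygon$, so $\overbow{E_n}/n\to\polygon$ for the Hausdorff metric and the limit shape of $\overbow{E_n}$ is $\polygon$.
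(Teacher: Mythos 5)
Your proof is correct and follows the same route as the paper: apply Corollary~\ref{cor:nth_intersection} to the extremal points of $C$ on each dominant line to place every vertex of $\overbow{E_n}$ within a uniformly bounded distance of the corresponding vertex $\pm n\polygon_i\zeta_i^\bot$ of $n\polygon$, then deduce the sandwich $(n-\delta_1)\polygon \subseteq \overbow{E_n} \subseteq (n+\delta_1)\polygon$. The only difference is that you make explicit the convex-geometry bookkeeping the paper compresses into ``overall we get'' --- convexity of $\polygon$ via the polar-dual identity with the zonotope $\sum_j[-\zeta_j,\zeta_j]$, the support-function comparison, and the separate treatment of small $n$ where $\overbow{E_n}$ need not yet be convex --- which is a useful elaboration rather than a different argument.
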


By convention, we consider that when $(n-\delta_1)\leq 0$ we have $(n-\delta_1)\polygon = \emptyset$
so that we only need to consider the case $n>\delta_1$.

\begin{proof}
  Let $z_i^+$ and $z_i^-$ be the extremal points of $C$ along its dominant grid line of normal $\zeta_i$ for $0\leq i < d$.
  Take $n\geq 0$, by Corollary~\ref{cor:nth_intersection} we have $|(e_{n,i}^+-z_i^+) - n\chi_i| \leq \delta_0$.
  Now take $\delta_1$ as
  \[
    \delta_1 := \frac{\delta_0 + \max\limits_{z\in C}|z|}{\min\limits_{0\leq i <d} |\polygon_i|}.
  \]
  We have $|e_{n,i}^+ - n\chi_i\zeta_i^\bot| \leq \delta_0 + |z_i^+| \leq \delta_i\polygon_i$,
  and similarly $|e_{n,i}^- + n\chi_i\zeta_i^\bot| \leq \delta_i\polygon_i$.
  Overall we get
  $(n-\delta_1)\polygon \subseteq \overbow{E_n} \subseteq (n+\delta_1)\polygon$,
  and therefore $\overbow{E_n}/n$ converges in Hausdorff distance to $\polygon$.
\end{proof}

\subsection{Corona in the multigrid}
\label{subsec:corona-multigrid}

We now aim at considering all multigrid lines.
For this purpose, we start by considering the distance between pairs of crossings in the multigrid,
\emph{i.e.}~the shortest path between the corresponding vertices when we see the multigrid as a graph.
It can also be seen as the distance between tiles in the edge-to-edge rhombus tiling,
and is directly correlated to the growth of the corona sequence.

\begin{lemma}[Regular multigrid distance 1]
  Let $A$ and $B$ be two vertices along a grid line $\ell$.
  The straight path from $A$ to $B$ along $\ell$ in the multigrid
  is a shortest path for the graph distance.
  \label{lemma:multigrid_distance_1}
\end{lemma}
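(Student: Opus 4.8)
The plan is to lift the multigrid crossings into the cubic lattice $\Z^d$ underlying de~Bruijn's correspondence and to reduce the graph distance to an $\ell^1$ distance there. Recall that each crossing is dual to a tile of $\tiling(\zeta,\gamma)$, and that by the de~Bruijn correspondence (Lemma~\ref{lemma:rhombus}) each tile lifts to a $2$-face of the unit-cube complex of $\Z^d$: the tile dual to a crossing of type $(i,j)$ spans the lattice directions $e_i,e_j$, its four corners being the lifts $\big(\lceil z\cdot\zeta_k-\gamma_k\rceil\big)_k$ of the four multigrid regions surrounding the crossing. I would attach to a crossing $v$ the coordinate $\Psi(v)\in(\tfrac12\Z)^d$ equal to the \emph{center} of this $2$-face, i.e.\ the average of the four surrounding region-lifts. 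The whole argument then rests on two facts: that $\Psi$ turns graph-adjacency into an $\ell^1$-distance of exactly $1$, and that along a single line $\Psi$ is monotone, so that the straight path wastes no step.

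For the lower bound I would show that two crossings $u\sim v$ sharing a tiling edge satisfy $\|\Psi(u)-\Psi(v)\|_1=1$. Consecutive crossings on an $m$-line correspond to two $2$-faces sharing the $1$-face carried by their common tiling edge (direction $\zeta_m$, i.e.\ $e_m$ in the lattice). Writing that shared $1$-face as $[q,q+e_m]$, the two centers are $q+\tfrac12 e_m+\tfrac{\epsilon}{2}e_a$ and $q+\tfrac12 e_m+\tfrac{\epsilon'}{2}e_b$ for the respective second spanning directions $a,b$ and signs $\epsilon,\epsilon'\in\{\pm1\}$; their difference has $\ell^1$-norm $\tfrac12+\tfrac12=1$ when $a\neq b$, and equals $\pm e_a$ (norm $1$) when $a=b$. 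Telescoping along an arbitrary path $v_0,\dots,v_L$ then yields $\|\Psi(v_0)-\Psi(v_L)\|_1\le L$, hence $d(A,B)\ge\|\Psi(A)-\Psi(B)\|_1$ for all crossings $A,B$.

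For tightness, let $A,B$ lie on the line $\ell$ of normal $\zeta_i$, and let $\pi^*$ be the straight path through the $p$ intermediate crossings, of length $p+1$. Parametrizing $\pi^*$ by arc length, each height $h_j(z)=z\cdot\zeta_j-\gamma_j$ is affine, hence monotone; consequently each coordinate $\Psi_j$ is monotone along $\pi^*$ (in particular $\Psi_i$ is constant, since $h_i$ is constant on $\ell$). Monotonicity of every coordinate means the triangle inequality used in the lower bound is an equality along $\pi^*$: summing the per-edge contribution $1$ over the $p+1$ edges gives $\|\Psi(A)-\Psi(B)\|_1=\sum_j|\Psi_j(B)-\Psi_j(A)|=p+1$. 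Combining, $d(A,B)\ge\|\Psi(A)-\Psi(B)\|_1=p+1=|\pi^*|\ge d(A,B)$, so $\pi^*$ is a shortest path.

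The main obstacle I expect is establishing the per-edge identity $\|\Psi(u)-\Psi(v)\|_1=1$ cleanly, because it is exactly here that regularity is used and that the boundary behavior of the ceiling must be controlled: a crossing lies on two lines, so on each of those coordinates two of its region-lifts coincide, and one must check that consecutive crossings differ by genuinely one lattice step and not two. Indeed the naive corner-lift $\big(\lceil h_j\rceil\big)_j$ can jump by $\ell^1$-norm $2$, which is precisely why the center $\Psi$, not the corner, is the right coordinate. Regularity guarantees that each intermediate point is the crossing of exactly two lines, so the lift is well defined and the center computation above applies verbatim; the remaining bookkeeping (consistency of the half-integer shifts, so that $\Psi_i$ is truly constant and the $\Psi_j$ truly monotone along $\ell$) is then routine once the correspondence of Lemma~\ref{lemma:rhombus} is in hand.
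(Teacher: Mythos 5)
Your proof is correct, but it takes a genuinely different route from the paper's. The paper argues directly in the plane: every grid line that crosses the open segment $AB$ separates $A$ from $B$, hence meets any path $p$ at some vertex, and regularity (no triple points) makes the map sending each such line to the first path vertex it contains injective, so $p$ has at least as many intermediate vertices as the straight segment. You instead lift each crossing to the center of its dual $2$-face in the cubical complex over $\Z^d$ and use $\Psi$ as a $1$-Lipschitz potential for the $\ell^1$ metric, with coordinatewise monotonicity along $\ell$ giving tightness; your case analysis ($a\neq b$ versus $a=b$, and the constancy of $\Psi_i$) is right. Both proofs encode the same combinatorial fact -- each separating line costs at least one step -- but yours buys more: it gives the exact identity $d(A,B)=\|\Psi(A)-\Psi(B)\|_1$ whenever a coordinate-monotone path exists, which would also deliver Lemma~\ref{lemma:multigrid_distance_2} (along the bent path $ACB$ every coordinate is again monotone precisely because the two directions are adjacent and $(C-A)\cdot(B-C)\geq 0$) and could streamline the distance bookkeeping of Section~\ref{subsec:corona-multigrid}. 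The price is that you import the cubical formulation of de Bruijn duality -- that the four regions around a regular crossing of type $(i,j)$ lift to the corners of a unit square spanned by $e_i,e_j$, and that consecutive crossings' squares share a $1$-face -- which the paper cites but never develops; a fully self-contained write-up should verify this from the definition of $\dualization$ (the lift $(\lceil z\cdot\zeta_k-\gamma_k\rceil)_k$ changes by exactly $e_k$ when crossing a $k$-line). I see no gap.
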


\begin{proof}
  See Figure \ref{fig:triangular_inequality_1} for an illustration.
  Recall that $\multigrid$ is a regular multigrid: no three lines intersect in a single point.
  Let $\ell$ be a grid line in $\multigrid$ and let $A$ and $B$ be two vertices on $\ell$.
  Let $p$ be a graph path from $A$ to $B$.
  We prove that $p$ has at least as many intermediate vertices as the straight segment $\ell_{AB}$.
  Let $k$ be the number of intermediate vertices in $\ell_{AB}$ and $m$ be the number of intermediate vertices on $p$.
  We write $p = (A, p_1, p_2, \dots, p_{m}, B)$ where $p_i$ is the $i$-th intermediate vertex along $p$ starting from $A$.

  Let $(\ell_i)_{0< i \leq k}$ be the lines that intersect $\ell_{AB}$ strictly between $A$ and $B$
  ($\multigrid$ is regular, hence each vertex along $\ell_{AB}$ corresponds to exactly one crossing line).
  Since the $\ell_i$ are (straight) lines, each $\ell_i$ that crosses $\ell_{AB}$ also crosses $p$,
  and each crossing is a vertex. Consequently, for each $i$, $\ell_i \cap p$ contains at least one vertex.
  Let $f$ be the function from $\llbracket 1, k\rrbracket$ to $\llbracket 1, m \rrbracket$
  that to $i$ associates the smallest $j$ such that $p_j$ belongs to $\ell_i$, \emph{i.e.},
  \[
    f(i):= \min \left\{ j \in \llbracket 1, m\rrbracket \mid  p_j \in \ell_i \right\}.
  \]

  We prove that $f$ is injective, thus $m\geq k$ and the length of path $p$ is at least as long as $\ell_{AB}$,
  which proves the lemma.
  For the sake of contradiction, assume we have $i\neq i'$ such that $f(i) = f(i') = j$.
  It follows that $\ell_i$ and $\ell_{i'}$ both intersect on vertex $p_j$ of $p$.
  However, since $\multigrid$ is regular, $p_j$ is exactly the intersection of $\ell_i$ and $\ell_{i'}$.
  If $j=1$, this means that the edge $p_Ap_1$ is a segment of $\ell_i$ or $\ell_{i'}$,
  which is a contradiction since $\ell_i$ and $\ell_{i'}$ intersect $\ell$ strictly between $A$ and $B$,
  and cannot have two intersection points with $\ell$.
  If $j>1$, we have that $p_{j-1}p_j$ is a segment of $\ell_i$ or $\ell_{i'}$,
  which implies that $p_{j-1}$ is either in $\ell_{i}$ which contradicts $f(i)=j$,
  or in $\ell_{i'}$ which contradicts $f(i')=j$.
\end{proof}

When two vertices do not belong to a common grid line,
a shortest path can be found along two grid lines
(following one, then the other, in order to reach the second vertex from the first vertex).
The two grid lines are consecutive or adjacent,
which we define below, before stating the lemma.

\begin{definition}[Adjacent directions]
  Directions $\zeta_i$ and $\zeta_j$ are \emph{adjacent}
  if there is no other multigrid direction $\zeta_k$ in the cone delimited by $\zeta_i$ and $\zeta_j$.
  Formally, consider $(\epsilon_i \zeta_i)_{0\leq i <d}$ with $\epsilon_i = \pm 1$ such that $\mathrm{arg}(\epsilon_i\zeta_i)\in [0,\pi[$.
    Now order the $\epsilon_i\zeta_i$ by increasing argument in $[0,\pi[$.
      We say that $\zeta_i$ and $\zeta_j$ are adjacent directions if $\epsilon_i\zeta_i$ and $\epsilon_j \zeta_j$ are consecutive in this ordering,
      considering that the last and first ones are also consecutive.
\end{definition}

\begin{figure}
  \center 
  \includegraphics[width=0.2\textwidth]{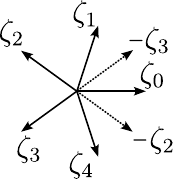}
  \caption{Adjacency of directions for the $5$-fold multigrid : $\zeta_0$ is adjacent to $\zeta_2$ and $\zeta_3$.}
  \label{fig:adjacent_directions}
\end{figure}

\begin{lemma}[Regular multigrid distance 2]
  Let $\zeta_i$ and $\zeta_j$ be adjacent multigrid directions,
  and let $\ell$ be a $i$-line and $\ell'$ be a $j$-line.
  Denote $C$ the intersection point of $\ell$ and $\ell'$.
  For any vertices $A$ of $\ell$ and $B$ of $\ell'$ such that $(C-A)\cdot(B-C)\geq 0$,
  the path from $A$ to $B$ in the multigrid that goes along $\ell$ until $C$ then along $\ell'$
  is a shortest path for the graph distance.
  \label{lemma:multigrid_distance_2}
\end{lemma}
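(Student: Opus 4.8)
The plan is to exhibit the $\mathsf L$-shaped path through $C$ and prove it is a geodesic by a counting argument in the spirit of Lemma~\ref{lemma:multigrid_distance_1}. Write $u:=C-A$ (along $\pm\zeta_i^\bot$) and $v:=B-C$ (along $\pm\zeta_j^\bot$); the hypothesis $(C-A)\cdot(B-C)\geq0$ says exactly that $u\cdot v\geq0$, i.e. the cone spanned by $u$ and $v$ has aperture at most $\tfrac\pi2$ and its two bounding rays lie along $\zeta_i^\bot$ and $\zeta_j^\bot$. The geometric engine of the whole proof is the adjacency of $\zeta_i,\zeta_j$: rotating the adjacency condition by $\tfrac\pi2$, no direction $\zeta_k^\bot$ lies strictly inside the cone delimited by $\zeta_i^\bot$ and $\zeta_j^\bot$, hence none lies strictly inside the subcone spanned by $u$ and $v$. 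First I would record the consequences of this fact for lines meeting the two segments $[C,A]$ and $[C,B]$.

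Let $\mathcal L_1$ (resp.\ $\mathcal L_2$) be the set of multigrid lines meeting the open segment $]C,A[$ (resp.\ $]C,B[$), and put $k_1:=|\mathcal L_1|$, $k_2:=|\mathcal L_2|$. Regularity makes these the numbers of intermediate vertices along the straight sub-paths $A\to C$ and $C\to B$, so by Lemma~\ref{lemma:multigrid_distance_1} the $\mathsf L$-path has length $k_1+k_2+2$ (with interior vertices the $k_1$ crossings on $\ell$, then $C$, then the $k_2$ crossings on $\ell'$). A line meeting \emph{both} open segments would have direction in the open cone spanned by $u,v$, which is forbidden; likewise a line through $A$ meeting $]C,B[$, or through $B$ meeting $]C,A[$, would have such a forbidden direction. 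This gives three facts I would establish up front: $\mathcal L_1\cap\mathcal L_2=\emptyset$; no line of $\mathcal L_1\cup\mathcal L_2$ passes through $A$ or $B$ (other than $\ell,\ell'$, which are not in these sets); and each line of $\mathcal L_1$ (resp.\ $\mathcal L_2$) leaves $B$ (resp.\ $A$) strictly on the $C$-side, hence \emph{separates} $A$ from $B$.

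Next I would bound an arbitrary path $p=(A=p_0,\dots,p_{m+1}=B)$ from below. Since every line of $\mathcal L_1\cup\mathcal L_2$ separates $A$ from $B$, the path must meet each of them, and the meeting point is a vertex of $p$ (an interior crossing would contradict consecutiveness of $p$'s vertices). Define $f(L):=\min\{\,s\geq1 : p_s\in L\,\}$ for $L\in\mathcal L_1\cup\mathcal L_2$. By the second fact these first hits are interior vertices of $p$, and the injectivity argument of Lemma~\ref{lemma:multigrid_distance_1} applies verbatim: if $f(L)=f(L')=s$ then $p_s=L\cap L'$, the incoming edge $p_{s-1}p_s$ lies along $L$ or $L'$, and $p_{s-1}$ already meets that line, contradicting minimality. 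Together with $\mathcal L_1\cap\mathcal L_2=\emptyset$ this yields $k_1+k_2$ distinct interior vertices, i.e.\ $m\geq k_1+k_2$.

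The remaining unit — the interior vertex $C$ of the $\mathsf L$-path, where the path turns from $\ell$ to $\ell'$ — is the crux and I expect it to be the main obstacle. A naive inclusion of $\ell'$ in the obstacle set only injects into $\{1,\dots,m+1\}$ (since $B\in\ell'$), which is off by one. To close the gap I would argue that $p$ in fact meets $\ell'$ at an \emph{interior} vertex: if $p$ touched $\ell'$ only at $B$, then $p_0,\dots,p_m$ would all lie strictly on $A$'s side of $\ell'$, forcing the last edge $p_m B$ to run along the second grid line $\lambda$ through $B$ (not $\ell'$); by the adjacency facts $\lambda\notin\mathcal L_1\cup\mathcal L_2$, and I would use the symmetric role of $\ell$ and the side-of-$\ell'$ topology to produce one interior vertex not already counted, giving $m\geq k_1+k_2+1$. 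Hence every path has length at least $k_1+k_2+2$, so the $\mathsf L$-path is a shortest path. I would double-check this last step carefully, as the interplay between ``separating lines'' and ``edges consumed at the turn'' (the feature distinguishing graph distance from a mere crossing count, as the square grid already shows) is exactly where the argument is delicate.
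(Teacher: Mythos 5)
Your proposal is correct and follows essentially the same route as the paper's proof: count the lines crossing the two segments $AC$ and $CB$ (adjacency of $\zeta_i,\zeta_j$ guaranteeing none crosses both and all separate $A$ from $B$), reuse the first-hit injectivity argument of Lemma~\ref{lemma:multigrid_distance_1} to get $m\geq k_1+k_2$, and recover the extra unit for the turn at $C$ by a case analysis on whether the last edge into $B$ runs along $\ell'$ or along the other grid line through $B$. The paper handles that last off-by-one exactly as you sketch it --- by adjoining $\ell'$ or $\ell''$ to the obstacle set --- and is no more detailed there than you are.
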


\begin{figure}
  \center
  \includegraphics[width=0.5\textwidth]{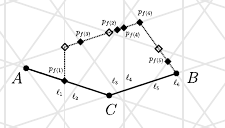}
  \caption{
    Three vertices $A$, $B$ and $C$, such that $C$ is at the intersection
    of the two adjacent multigrid directions.
    If the multigrid is regular then a shortest path between $A$ and $B$
    is along the two segments $AC$ then $CB$.
  }
  \label{fig:triangular_inequality_2}
\end{figure}

\begin{proof}
  See Figure~\ref{fig:triangular_inequality_2} for an illustration.
  Denote $\ell_{AC}$ and $\ell_{CB}$ the straight paths from $A$ to $C$ and from $C$ to $B$ respectively,
  and $\ell_{ACB}$ the union of $\ell_{AC}$ and $\ell_{CB}$.
  Let $(\ell_i)_{0<i\leq k}$ be the lines that intersect $\ell{AC}$ strictly between $A$ and $C$,
  or $\ell{CB}$ strictly between $C$ and $B$. 
  Note that since $\ell$ and $\ell'$ are adjacent directions,
  there is no line that crosses both $\ell$ between $A$ and $C$ and $\ell'$ between $C$ and $B$.
  
  Let $p= (A,p_1,\dots,p_m,B)$ be any path from $A$ to $B$,
  and denote $k$ the number of intermediate vertices between in $\ell_{ACB}$, excluding $A$, $B$ and $C$.
  Applying twice the same reasoning as in the proof of Lemma~\ref{lemma:multigrid_distance_1} gives $m\geq k$.
  Indeed, each line that crosses $\ell_{AC}$ or $\ell_{CB}$ also crosses the segment $AB$,
  because $\zeta_i$ and $\zeta_j$ are adjacent directions, and therefore it also crosses the path $p$.
  However the length of $\ell_{ACB}$ is $k+1$, as $C$ is also a vertex of $\ell_{ACB}$.

  We define $\ell_{k+1}$ as follows.
  Vertex $B$ is a crossing of $\ell'$ with some other line $\ell''$.
  As $p$ ends at $B$, the last edge of $p$ is either along $\ell$ or $\ell'$.
  If the last segment of $p$ is along line $\ell'$,
  then set $\ell_{k+1}:= \ell'$, otherwise set $\ell_{k+1} := \ell''$.
  With this extra convention we obtain an injective mapping $f$
  from $\llbracket 1, k+1\rrbracket$ to $\llbracket 1, m\rrbracket$.
\end{proof}

With this understanding of the distance between pairs of crossings
(vertices of the multgrid graph), 
we are able to approximate the corona limit of a patch.
For this purpose we start again by first considering only the crossings on dominant lines,
and then include other crossings by exploiting our previous result (Lemma~\ref{lemma:multigrid_distance_2})
that distances can be approximated using two adjacent multigrid directions.
Let us introduce the notions of dominant line vertices, two lines vertices,
and their respective approximations of a given patch (their intersection with it).

\begin{definition}[Dominant line vertices and approximant]
  Given the dominant lines \\ 
  $\dominantlines=(L_i)_{0\leq i < d}$,
  denote $\multigrid^L$ the set of vertices that are on a dominant line.
  Given a patch $D$, we call \emph{dominant line approximant} of $D$ the set of vertices
  $D\cap \multigrid^L$, \emph{i.e.}, the vertices in $D$ that are on a dominant line.
\end{definition}

An example of vertex set $\multigrid^L$ is presented on Figure~\ref{fig:dominant_lines}.

\begin{figure}
  \center
  \includegraphics[width=0.8\textwidth]{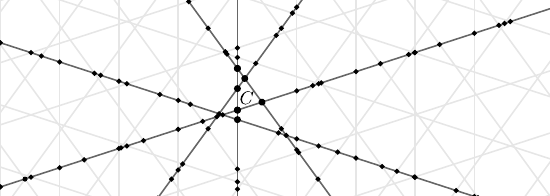}
  \caption{
    In round dots a patch $C$, with dominant lines shown in dark grey.
    The dominant line vertices $\multigrid^L$ are depicted in diamonds.
  }
  \label{fig:dominant_lines}
\end{figure}

\begin{lemma}[Dominant line approximant of the coronas]
  Let $C$ be a finite initial patch.
  There exists $\delta_2\in \mathbb{R}^+$ such that for any $n\in\N$,
  \[
    (n- \delta_2)\polygon \cap \multigrid^L
    \subseteq
    C_n \cap \multigrid^L
    \subseteq
    (n+\delta_2)\polygon\cap \multigrid^L.
  \]
  \label{lemma:dominant_line_approximant}
\end{lemma}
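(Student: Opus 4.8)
The plan is to prove the two inclusions separately, reducing everything to the behaviour along a single dominant line and invoking Lemma~\ref{lemma:multigrid_distance_1} to read off on-line distances. Since $C$ is a fixed finite patch, set $D := \diam(C)$ for its graph diameter (a constant), $R := \max_{z\in C}|z|$, and for each $i$ let $v_i^+, v_i^-\in C$ be the extremal vertices of $C$ along $L_i$ (they exist because $C$ meets every dominant line). A vertex $v\in\multigrid^L$ lies on some $L_i$, so it suffices to control, for each direction $i$, how far along $L_i$ the vertices of $C_n$ can reach.

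For the outer inclusion $C_n\cap\multigrid^L\subseteq(n+\delta_2)\polygon$, take $v\in C_n$ on $L_i$, say on the positive side, and let $w\in C$ realise $d(v,w)=d(v,C)\le n$. Since $w,v_i^+\in C$ are at graph distance at most $D$, the triangle inequality gives $d(v,v_i^+)\le d(v,w)+d(w,v_i^+)\le n+D$. By Lemma~\ref{lemma:multigrid_distance_1} the straight path along $L_i$ is geodesic, so $v$ is at most $n+D$ intersection points from $v_i^+$ along $L_i$; Corollary~\ref{cor:nth_intersection} then converts this step count into the Euclidean bound $v\cdot\zeta_i^\bot\le v_i^+\cdot\zeta_i^\bot+(n+D)\polygon_i+\delta_0\le(n+D)\polygon_i+(R+\delta_0)$. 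Since $\polygon_i\zeta_i^\bot$ is the vertex of $\polygon$ in direction $\zeta_i^\bot$ and $v$ differs from the point $(n+D)\polygon_i\zeta_i^\bot\in(n+D)\polygon$ only by a bounded offset (its $\zeta_i$-component stays $O(1)$ because $L_i$ is a fixed dominant line), convexity of $\polygon$ places $v$ in $(n+\delta_2)\polygon$ once $\delta_2$ exceeds $D$ plus the accumulated constants. The symmetric argument on the negative side and in the other directions completes this inclusion.

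For the inner inclusion $(n-\delta_2)\polygon\cap\multigrid^L\subseteq C_n$, take $v\in(n-\delta_2)\polygon$ on $L_i$. Then $v\cdot\zeta_i^\bot\le(n-\delta_2)\polygon_i$, and Corollary~\ref{cor:nth_intersection} shows that $v$ is at most $n$ intersection points from $v_i^+$ along $L_i$ as soon as $\delta_2$ dominates $\delta_0$ together with the $O(1)$ offset terms (and the $\polygon_i^{-1}$ factors). By Lemma~\ref{lemma:multigrid_distance_1} this on-line path is geodesic of length at most $n$, hence $d(v,C)\le d(v,v_i^+)\le n$ and $v\in C_n$. Collecting all additive constants ($D$, $R$, $\delta_0$, the offsets, and the factors $\polygon_i^{-1}$) into a single $\delta_2$ yields both inclusions at once; Proposition~\ref{prop:endpoints_limit_shape} may be used as an alternative bookkeeping device, since the vertices $e_{n,i}^\pm$ it controls are exactly the endpoints of the on-line segments appearing here.

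The main obstacle is the outer inclusion, where one must rule out that a far-away dominant-line vertex reaches $C$ through a shortcut that leaves $L_i$. The key point is that this is handled \emph{without} analysing such shortcuts at all: the triangle inequality together with the bounded internal diameter $D$ of the fixed patch $C$ caps the on-line distance $d(v,v_i^+)$, after which Lemma~\ref{lemma:multigrid_distance_1} turns that geodesic distance into an exact step count and Corollary~\ref{cor:nth_intersection} into a Euclidean position. The only remaining work is the routine geometric check that an on-line point at bounded offset from a vertex of a dilate of the convex polygon $\polygon$ lies in a slightly larger dilate, which is absorbed into $\delta_2$.
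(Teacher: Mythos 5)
Your proposal is correct and follows essentially the same route as the paper: both inclusions are reduced to counting intersection points along each dominant line, using the geodesic property of straight on-line paths (Lemma~\ref{lemma:multigrid_distance_1}), the triangle inequality with the bounded diameter of $C$, and the step-count-to-Euclidean-position conversion of Corollary~\ref{cor:nth_intersection}. The paper merely packages the last step through the endpoint hulls $\overbow{E_n}$ and Proposition~\ref{prop:endpoints_limit_shape} (taking $\delta_2=\delta_1+\diam(C)$), which is exactly the alternative bookkeeping you mention at the end.
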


\begin{proof}
  Let $\dominantlines=(L_i)_{0\leq i<d}$ be the dominant lines of patch $C$.
  Denote $\diam(C)$ the diameter of $C$ for the graph distance.
  Define $\delta_2 = \delta_1 + \diam(C)$.
  Let us consider an arbitrary $n\in \mathbb{N}$.
  We first argue that
  \[
    (n-\delta_2)\polygon \cap \multigrid^L
    \subseteq
    (n-\delta_1)\polygon \cap \multigrid^L
    \subseteq
    C_n \cap \multigrid^L.
  \]

  Recall that by convention, if $(n-\delta_2)\leq 0$
  we have $(n-\delta_2)\polygon = \emptyset$ so the inclusion vacuously holds.
  In the following we assume $n> \delta_2$.
  On line $L_i$, the vertices in $(n-\delta_1)\polygon$ are also in $\overbow{E_n}$
  by Proposition~\ref{prop:endpoints_limit_shape}, so they are at graph distance at most $n$
  from the initial endpoint $e_{0,i}^+$ or $e_{0,i}^-$ on line $L_i$,
  which means that they are at graph distance at most $n$ from $C$.
  Consequently they are in $C_n$, so that
  \[
     (n-\delta_2)\polygon \cap \multigrid^L
    \subseteq
     C_n \cap \multigrid^L.
  \]

  We now prove the second inclusion:
  $C_n \cap \multigrid^L \subseteq (n+\delta_2)\polygon \cap \multigrid^L$.
  Consider a line $L_i$ and a vertex $z$ in $C_n \cap L_i$.
  If $z\in C$ we have in particular that $z\in \overbow{E_0}\subseteq \overbow{E_n}$,
  so by Proposition\ref{prop:endpoints_limit_shape} we have
  \[
    z\in  (n+\delta_1)\polygon \cap \multigrid^L
    \subseteq
    ( n+\delta_2)\polygon \cap \multigrid^L.
  \]
  Otherwise, if $z$ is not in the original patch $C$, $z$ is on a dominant some dominant line $L_i$ either in direction $+\zeta_i^\bot$ or in direction $-\zeta_i^\bot$.
  Without loss of generality, assume $z$ is in direction $+\zeta_i^\bot$ from $e_{0,i}^+$.
  Since $z\in C_n$ there exists a vertex $z'\in C$ such that the graph distance $\distance(z,z')\leq n$.
  Since $C$ has diameter $\diam(C)$ we have $\distance(e_{0,i}^+, z')\leq \diam(C)$,
  so $\distance(e_{0,i}^+, z')\leq n+\diam(C)$.
  We then get that $z$ is between $e_{0,n}^+$ and $e_{n+\diam(C),i}^+$
  so that $z \in L_i \cap \overbow{E}_{n+\diam(C)}$,
  and by Proposition~\ref{prop:endpoints_limit_shape} we have
  $z \in l_i \cap (n + \diam(C) + \delta_1)\polygon$.
  Overall we get as expected $C_n  \cap \multigrid^L\subseteq (n+\delta_2)\polygon \cap \multigrid^L$
  with $\delta_2 = \delta_1 + \diam(C)$.
\end{proof}

\begin{definition}[Two lines vertices and approximant]
  Given a patch $C$ with dominant lines $\dominantlines=(L_i)_{0\leq i<d}$,
  and two adjacent grid directions $\zeta_i$ and $\zeta_j$,
  we denote by $\multigrid^{\diamond(i,j)}$ the set of multigrid vertices that are both
  in the cone delimited by the dominant lines $L_i$ and $L_j$, and
  either along a $i$-line or a $j$-line.
  We denote by $\multigrid^{\diamond}$ the union of $\multigrid^{\diamond(i,j)}$
  for all pairs of adjacent directions $i,j$.
  Given a patch $D$, we call \emph{two lines approximant} of $D$ the set of vertices
  $D\cap \multigrid^\diamond$.
\end{definition}

Examples of vertex sets $\multigrid^{\diamond(i,j)}$
and $\multigrid^{\diamond}$ are presented on Figure~\ref{fig:two_lines_vertices}.
Lemma~\ref{lemma:two_lines_approximant} is the last difficult technical ingredient
towards the proof of our main result.

\begin{figure}
  \center
  \begin{subfigure}{.8\textwidth}
    \center
    \includegraphics[width=0.8\textwidth]{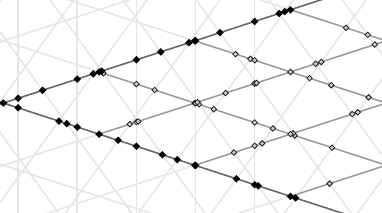}
    \caption{
      The two lines vertices $\multigrid^{\diamond(i,j)}$ in black and gery diamonds.
      The lines $L_i$ and $L_j$ on the boundary of the cone are in dark grey,
      and the lines parallel to $L_i$ or $L_j$ in the cone are in medium grey.
      Vertices outside of the cone or in the cone but not on a $i$-line or a $j$-line
      do not belong to $\multigrid^{\diamond(i,j)}$.
    }
  \end{subfigure}
  \\[2em]
  \begin{subfigure}{.8\textwidth}
    \center
    \includegraphics[width=0.9\textwidth]{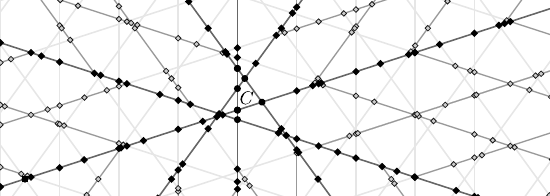}
    \caption{
      The two lines vertices $\multigrid^\diamond$ in black and grey diamonds.
      The dominant lines are shown in dark grey.
    }
  \end{subfigure}
  \caption{
    Illustration of two lines vertices
    $\multigrid^{\diamond(i,j)}$ and $\multigrid^{\diamond}$.
  }
  \label{fig:two_lines_vertices}
\end{figure}

\begin{lemma}[Two lines approximant of the corona]
  Let $C$ be a finite initial patch.
  There exists $\delta_3$ such that for any $n\in\N$,
  \[
    (n-\delta_3)\polygon \cap \multigrid^\diamond
    \subseteq
    C_n\cap \multigrid^\diamond
    \subseteq
    (n+\delta_3)\polygon \cap \multigrid^\diamond.
  \]
  \label{lemma:two_lines_approximant}
\end{lemma}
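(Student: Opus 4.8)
The plan is to prove the two inclusions separately on each cone $\multigrid^{\diamond(i,j)}$ (there are only finitely many pairs of adjacent directions) and then take $\delta_3$ to be the maximum of the resulting constants. So fix adjacent directions $\zeta_i,\zeta_j$ and a vertex $z\in\multigrid^{\diamond(i,j)}$; without loss of generality $z$ lies on some $i$-line $\ell$ (the $j$-line case is symmetric, with the roles of $L_i$ and $L_j$ exchanged). Let $C'=\ell\cap L_j$ be the crossing of $\ell$ with the dominant $j$-line. The route I would use from the patch to $z$ first travels along $L_j$ from $C$ up to $C'$, and then along $\ell$ from $C'$ to $z$. Because $z$ belongs to the cone delimited by $L_i$ and $L_j$, the vectors $C'-z$ and $B-C'$, where $B$ is a vertex of $C$ on $L_j$, satisfy the sign condition of Lemma~\ref{lemma:multigrid_distance_2}, so this concatenation is a shortest path, up to the bounded correction $\diam(C)$ coming from the fact that $C$ is a patch rather than a single point (handled exactly as in Lemma~\ref{lemma:dominant_line_approximant}).

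Next I would convert each of the two straight legs from a count of crossings (its graph length) into a Euclidean length. Writing $z$ in the oblique coordinates of the cone as $z=a\,\zeta_i^\bot+b\,\zeta_j^\bot$ with $a,b\geq 0$, the leg $C'\to z$ runs along an $i$-line for a Euclidean length $a$, and the leg $C\to C'$ runs along $L_j$ for a Euclidean length $b$, both up to a bounded additive term coming from the fixed position of $L_j$ near the origin. Applying Corollary~\ref{cor:nth_intersection} to each leg converts these Euclidean lengths into graph lengths $a/\polygon_i$ and $b/\polygon_j$, each up to an additive error controlled by $\delta_0$. Combining with the previous paragraph yields
\[
  \left| \distance(C,z) - \left(\tfrac{a}{\polygon_i}+\tfrac{b}{\polygon_j}\right) \right| \leq \kappa
\]
for a constant $\kappa$ depending only on the multigrid directions and $\diam(C)$.

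Finally I would match the functional $a/\polygon_i+b/\polygon_j$ to the characteristic polygon. Since the vertices $\pm\polygon_k\zeta_k^\bot$ of $\polygon$ are ordered by argument exactly as the directions, the edge of $\polygon$ facing the cone $\multigrid^{\diamond(i,j)}$ is the segment joining $\polygon_i\zeta_i^\bot$ and $\polygon_j\zeta_j^\bot$; a direct computation then shows that, for $z=a\,\zeta_i^\bot+b\,\zeta_j^\bot$ in this cone, one has $z\in\lambda\polygon$ if and only if $a/\polygon_i+b/\polygon_j\leq\lambda$. Hence $z\in C_n$, i.e.\ $\distance(C,z)\leq n$, forces $a/\polygon_i+b/\polygon_j\leq n+\kappa$ and thus $z\in(n+\kappa)\polygon$; conversely, $z\in(n-\kappa)\polygon$ gives $a/\polygon_i+b/\polygon_j\leq n-\kappa$, hence $\distance(C,z)\leq n$ and $z\in C_n$. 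Taking $\delta_3$ to be $\kappa$ maximized over the finitely many adjacent pairs yields both claimed inclusions on all of $\multigrid^\diamond$.

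The main obstacle is the geometric identification in the third paragraph: verifying that membership in $\lambda\polygon$ within the cone is governed precisely by the linear functional $a/\polygon_i+b/\polygon_j$, which relies on the convexity of $\polygon$ and on adjacent directions corresponding to adjacent polygon vertices. One must also check carefully that the cone condition genuinely supplies the sign hypothesis of Lemma~\ref{lemma:multigrid_distance_2}, and that all additive errors (the $\delta_0$ of Corollary~\ref{cor:nth_intersection}, the offset of $L_j$, and $\diam(C)$) can be absorbed into a single constant that is uniform over the cone and over $z$.
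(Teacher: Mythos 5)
Your proposal is correct and follows essentially the same route as the paper's proof: decompose the path to $z$ into a leg along a dominant line and a leg along the $i$- or $j$-line through $z$, certify it is a shortest path via Lemma~\ref{lemma:multigrid_distance_2}, convert Euclidean lengths to graph lengths via Corollary~\ref{cor:nth_intersection}, and use the fact that inside each cone the scaled polygons $\lambda\polygon$ are level sets of the linear functional $a/\polygon_i+b/\polygon_j$ (the paper states this same fact as additivity of membership in $x_{i,j}+\alpha\polygon$ within the cone). The only cosmetic difference is that the paper handles the dominant-line leg by invoking Lemma~\ref{lemma:dominant_line_approximant} rather than applying Corollary~\ref{cor:nth_intersection} directly with a $\diam(C)$ correction, which amounts to the same thing.
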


\begin{proof}
  Let $\dominantlines=(L_i)_{0\leq i<d}$ be the dominant lines of patch $C$.
  Denote $\cone_{i,j}^+$ and $\cone_{i,j}^-$ the two half-cones 
  delimited by lines $L_i$ and $L_j$ where $i$ and $j$ are adjacent directions,
  and $x_{i,j}$ the intersection of lines $L_i$ and $L_j$.


  Remark that as the intersection of $x_{i,j}+\polygon$ with $\cone_{i,j}$ is a triangle, it is convex and therefore  
  in $\cone_{i,j}$, the $x_{i,j} + \alpha\polygon$ are in some sense linear.
  That is : if $x \in \cone_{i,j}\cap (x_{i,j} + \alpha\polygon)$ and $y \in (\cone_{i,j}-x_{i,j}) \cap \beta \polygon$
  then $x+y\in \cone_{i,j}\cap \left(x_{i,j} + (\alpha + \beta)\polygon\right)$.
  Recyprocally, if $x \in \cone_{i,j} \cap (x_{i,j}+ \alpha\polygon)$, and $x+y \in \cone_{i,j} \cap \left(x_{i,j} + (\alpha+\beta)\polygon\right)$ with $y\in \cone_{i,j}-x_{i,j}$ then $y \in \beta\polygon$.

  \medskip

  Denote $\dominantlineoffset$ the ratio between the maximum of the modulus of the intersection points of the dominant lines of adjacent directions and the minimal modulus of the vertices $\polygon_i$ of the characteristic polygon $\polygon_i$, \emph{i.e.},
  \[
    \dominantlineoffset := \max\limits_{i,j \text{ adjacent directions}} |x_{i,j}| / \min\limits_{i} |\polygon_i|. 
  \]
  In particular, we have that all $x_{i,j}$ are in $\dominantlineoffset\polygon$, which we will use later.  
  Define
  \[
    \delta_3 = \lceil\delta_2\rceil + \lceil \delta_1\rceil + \lceil \dominantlineoffset \rceil + \dominantlineoffset +2.
  \]
  
  \smallskip
  We prove the first inclusion.
  As $\multigrid^\diamond$ is simply the union of the $2d$ half-cones $\cone_{i,j}^\pm$, we prove it separately for each half cone.
  Let $i$ and $j$ be adjacent directions, let $x\in \multigrid^\diamond$ in the half cone $\cone_{i,j}^+$ such that $x\in (n -\dominantlineoffset) \polygon$.
  By definition of $\dominantlineoffset$ we have $(n-\dominantlineoffset)\polygon \subset x_{i,j} + n\polygon$
  and therefore $x\in x_{i,j} + n\polygon$.
  By definiton of $\multigrid^\diamond$, $x$ is the intersection of a $i$-line or a $j$-line in a half-cone $\cone_{i,j}^+$ or $\cone_{i,j}^-$.
  Without loss of generality, assume $x$ on a $i$-line which we denote $\ell_i$.
  If $\ell_i$ is actually the dominant line $L_i$,
  the inequality holds due to Lemma~\ref{lemma:dominant_line_approximant}.
  Otherwise, the line $\ell_i$ crosses the other dominant line $L_j$ at some point $x'$.
  Because $i$ and $j$ are adjacent directions,
  we have $x' \in (x_{i,j} + \alpha n \polygon)$ and $x \in \ell_i \cap (x' + (1-\alpha)n \polygon$ for some $\alpha \in [0,1]$.
  It follows by Lemma~\ref{lemma:dominant_line_approximant} that $x' \in C_{\lceil \alpha n \rceil + \lceil\delta_2\rceil \lceil \dominantlineoffset \rceil}$
  and from Corollary~\ref{corollary:nth_vertex} that  $\distance(x,x') \leq \lceil (1-\alpha) n \rceil  \lceil \delta_1 \rceil$.
  Overall, with Lemma~\ref{lemma:multigrid_distance_2} we obtain that $x \in C_{n+\lceil \delta_2\rceil  + \lceil\delta_1\rceil + \lceil \dominantlineoffset \rceil + 2}$.
  
  We have proved that for any $n$, $(n-\dominantlineoffset)\polygon \cap \multigrid^\diamond \subset  C_{n+\lceil \delta_2\rceil + \lceil\delta_1\rceil  \lceil \dominantlineoffset \rceil + 2}$.
  We can reformulate it as for any $n$, $(n-\delta_3)\polygon \cap \multigrid^\diamond \subset  C_{n}$,
  as expected.
  
  \medskip 
  Now we prove the second inequality.
  Take $x \in C_{n}\cap \multigrid^\diamond$.
  If $x$ is on a dominant line, the inequality holds due to Lemma \ref{lemma:dominant_line_approximant}.
  We now assume $x$ is not on a dominant line.
  By definition there exist adjacent directions $i,j$ such that $x$ is in $\cone_{i,j}^\pm$.
  Without loss of generality, assume that $x$ is on some $j$-line $\ell_j$ in $\cone_{i,j}^+$.
  Denote $x' = L_i \cap \ell_j$.
  Recall that, without loss, we assume that the original patch $C_0$ intersects all dominant line.
  Let $x''$ be a point of $C_0\cap L_i$.
  By Lemma \ref{lemma:multigrid_distance_2},
  a shortest path from $x''$ to $x$ is through $x'$.
  Consequently, $\distance(x'',x) = \distance(x'',x') + \distance(x',x)$ and since $x\in C_n$ we have $\distance(x'',x)\leq n$.
  So there exists $k<n$ such that $\distance(x'',x')=k$ and $\distance(x',x)\leq n-k$.
  Applying Lemma~\ref{lemma:dominant_line_approximant} we have
  \[
    x' \in C_k\cap \multigrid^L
    \subseteq
    (k+\delta_2)\polygon
  \]
  and by Corollary~\ref{corollary:nth_vertex} we have
  $x-x' = a \zeta_j^\bot$ with $a \leq (n-k+\delta_1)\polygon_j$,
  which means that $x \in x' + (n-k+\delta_1)\polygon$.
  Since $x$ is in the cone $\cone_{i,j}^+$ and $x-x'$ is in the $0$-centered cone $\cone_{i,j}^+-x_{i,j}$,
  we obtain
  \[
    x \in (k+\delta_2 + (n-k)+\delta_1)\polygon \subseteq (n + \delta_3)\polygon.
  \]
\end{proof}

The following simple result shows that crossings of a given type $(i,j)$ are relatively dense,
in the sense that there is a fixed radius (in terms of Euclidean distance or graph distance)
such that around any crossing involving an $i$-line, one can find a crossing of type $(i,j)$.
From this relative density, we obtain the relative density of $\multigrid^\diamond$,
so that the two lines approximants of the coronas are close to the actual coronas
(as formalized in Proposition~\ref{prop:multigrid_corona_limit}).

\begin{lemma}[Multigrid bounded distance]
  In the multigrid $\multigrid$,
  there exists a radius  $r'\in \mathbb{R}^+$ and $k_r'\in\mathbb{N}$ such that
  for any two distinct grid directions $\zeta_i$ and $\zeta_j$,
  and any multigrid vertex $z$ on a $i$-line,
  there is a vertex $z'$ intersection of that same $i$-line with a $j$-line,
  such that the Euclidean distance from $z$ to $z'$ is at most $r'$
  and the graph distance from $z$ to $z'$ is at most $k_r'$.
  \label{lemma:density}
\end{lemma}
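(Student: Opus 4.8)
The plan is to fix the $i$-line $\ell$ passing through $z$, locate the nearest crossing of type $(i,j)$ along $\ell$, and then bound both its Euclidean distance and its graph distance from $z$ by uniform quantities depending only on the multigrid.

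First I would describe the $(i,j)$ crossings along $\ell$ explicitly. Parametrizing $\ell$ as $z+t\zeta_i^\bot$, a point $z+t\zeta_i^\bot$ is a crossing of type $(i,j)$ exactly when $(z+t\zeta_i^\bot)\cdot\zeta_j-\gamma_j\in\Z$. Since $\zeta_i$ and $\zeta_j$ are distinct grid directions, $\zeta_j$ is not parallel to $\zeta_i$, hence $\zeta_i^\bot\cdot\zeta_j\neq 0$, and the solutions in $t$ form an arithmetic progression with common difference exactly $s_{ij}:=|\zeta_i^\bot\cdot\zeta_j|^{-1}$ (this is precisely the spacing already used in Lemma~\ref{lemma:frequency_crossing}). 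Consequently, moving along $\ell$ in the direction $+\zeta_i^\bot$ from $z$, one reaches a $(i,j)$ crossing $z'$ within arclength $s_{ij}$; as $|\zeta_i^\bot|=1$ this crossing satisfies $|z-z'|\le s_{ij}$, and it is a genuine multigrid vertex. Setting $r':=\max_{i\neq j}s_{ij}$, which is finite because there are finitely many directions and each $s_{ij}$ is finite, yields the required Euclidean bound.

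Next I would bound the graph distance from $z$ to the very same $z'$. Since $z$ and $z'$ both lie on $\ell$, Lemma~\ref{lemma:multigrid_distance_1} guarantees that the straight path along $\ell$ is a shortest path, so the graph distance equals the number of edges on that segment, namely one more than the number of multigrid vertices lying strictly between $z$ and $z'$ on $\ell$. This segment has arclength at most $r'$, so by Corollary~\ref{corollary:nth_vertex} the number of crossings of each type $(i,k)$ it contains is at most $r'\,|\zeta_i^\bot\cdot\zeta_k|+2$; summing over the $d-1$ directions $k\neq i$ bounds the number of intermediate vertices by $r'\polygon_i^{-1}+2(d-1)$, recalling $\polygon_i^{-1}=\sum_{k\neq i}|\zeta_i^\bot\cdot\zeta_k|$. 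Hence the graph distance is at most $r'\polygon_i^{-1}+2(d-1)+1$, and taking $k_r':=\lceil\max_i r'\polygon_i^{-1}\rceil+2(d-1)+1$ gives a bound depending only on the multigrid order and directions.

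The argument is essentially a combination of the shortest-path-along-a-line result (Lemma~\ref{lemma:multigrid_distance_1}) with the frequency estimates (Lemma~\ref{lemma:frequency_crossing} and Corollary~\ref{corollary:nth_vertex}), so no genuinely new idea is needed. The one point that deserves care is that, although the regularity of the multigrid allows distinct crossings to be arbitrarily close in Euclidean distance, the number of crossings inside a window of bounded length is nonetheless uniformly bounded; this is exactly what Corollary~\ref{corollary:nth_vertex} provides, and it is what keeps the graph distance from $z$ to $z'$ finite and uniform even when $z'$ happens to be very close to $z$.
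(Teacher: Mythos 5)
Your proposal is correct and follows essentially the same route as the paper: take $r'$ to be the maximal spacing $\max_{i\neq j}|\zeta_i^\bot\cdot\zeta_j|^{-1}$ between consecutive $(i,j)$ crossings, then bound the graph distance by counting the crossings of each type inside a window of length $r'$ along the line. Your version is in fact slightly more careful than the paper's one-line argument, since you explicitly invoke Lemma~\ref{lemma:multigrid_distance_1} to identify the graph distance with the number of intermediate crossings and carry the $+2$ error terms from Lemma~\ref{lemma:frequency_crossing}, but the underlying idea is identical.
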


\begin{proof}
  Simply take $r' := \max_{i\neq j} (\zeta_i^\bot \cdot \zeta_j)^{-1}$
  the maximum over $j\neq i$ of the distance along some $i$-line between two consecutive crossings of type $(i,j)$,
  and $k_r' := (d-1) \lceil r\rceil$ because between two vertices at Euclidean distance $r'$ on some $i$-line,
  there are at most $\lceil r'\rceil$ crossings of type $(i,j)$ for each $j\neq i$.
\end{proof}

\begin{corollary}
  In the multigrid $\multigrid$,
  there exist a radius $r\in \mathbb{R}^+$ and $k_r\in\mathbb{N}$ such that
  for any two distinct grid directions $\zeta_i$ and $\zeta_j$,
  and any multigrid vertex $z$,
  there is a vertex $z'$ intersection $(i,j)$,
  such that the Euclidean distance from $z$ to $z'$ is at most $r$
  and the graph distance from $z$ to $z'$ is at most $k_r$.
  \label{corollary:density}
\end{corollary}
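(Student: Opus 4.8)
The plan is to bootstrap Lemma~\ref{lemma:density} through a single intermediate hop, exploiting the fact that every multigrid vertex lies on a grid line whose direction can be bridged to $\zeta_i$. First I would observe that any multigrid vertex $z$ is a crossing of two lines, hence in particular it lies on some $a$-line for a direction $\zeta_a$. Fixing the target directions $\zeta_i$ and $\zeta_j$, I would apply Lemma~\ref{lemma:density} to the pair of directions $(\zeta_a,\zeta_i)$ and the vertex $z$ (which lies on the $a$-line): this produces a vertex $z_1$ that is a crossing of that same $a$-line with an $i$-line, at Euclidean distance at most $r'$ and graph distance at most $k_r'$ from $z$. Crucially, $z_1$ now lies on an $i$-line. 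If $\zeta_a = \zeta_i$ already, this first hop is unnecessary and I would simply set $z_1 := z$.

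Since $z_1$ lies on an $i$-line, I would then apply Lemma~\ref{lemma:density} a second time, now to the pair $(\zeta_i,\zeta_j)$ and the vertex $z_1$, yielding a crossing $z'$ of type $(i,j)$ along that $i$-line, again at Euclidean distance at most $r'$ and graph distance at most $k_r'$ from $z_1$. Chaining the two estimates by the triangle inequality, for both the Euclidean and the graph metric, gives
\[
  |z - z'| \leq |z - z_1| + |z_1 - z'| \leq 2r'
  \qquad\text{and}\qquad
  \distance(z,z') \leq \distance(z,z_1) + \distance(z_1,z') \leq 2k_r',
\]
so it suffices to take $r := 2r'$ and $k_r := 2k_r'$.

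I do not expect any genuine obstacle: the statement follows from two applications of the preceding lemma together with a triangle inequality. The only point requiring attention is that Lemma~\ref{lemma:density} demands its starting vertex lie on a line of the appropriate direction, which is precisely why the argument routes through an intermediate $i$-line crossing $z_1$ before invoking the lemma for the pair $(\zeta_i,\zeta_j)$. The uniformity of the constants $r$ and $k_r$ over all vertices $z$ is inherited directly from the uniformity already established in Lemma~\ref{lemma:density}.
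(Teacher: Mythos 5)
Your proof is correct and follows essentially the same two-hop argument as the paper: route through an intermediate crossing on an $i$-line via Lemma~\ref{lemma:density}, then apply the lemma again for the pair $(\zeta_i,\zeta_j)$, and conclude with the triangle inequality using $r=2r'$ and $k_r=2k_r'$. Your explicit handling of the case where $z$ already lies on an $i$-line is a minor point the paper leaves implicit.
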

\begin{proof}
  Here we take $r:=2r'$ and $k_r := 2k_r'$ with $r'$ and $k_r'$ from Lemma \ref{lemma:density}.
  Let $z$ be a multigrid vertex.
  $z$ is on some $k$-line $\ell$. At euclidean distance at most $r'$ and graph distance at most $k_r'$ from $z$ there is a point $z''$ of type $(i,k)$, and at euclidean distance at most $r'$ and graph distance at most $k_r'$ from $z''$ there is a point $z'$ of type $(i,j)$.
  Overall $z'$ is at euclidean distance at most $r=2r'$ at graph distance at most $k_r=2k_r'$ from $z$.
\end{proof}

The next Proposition is the main result of this section,
concluding our study of the corona limit in the multigrid.
It states that from any initial patch in the multigrid,
successive coronas tend to what we have defined as the characteristic polygon of the multigrid.
Recall that $\multigrid$ is a regular multigrid, seen as a graph (its set of vertices).

\begin{proposition}[Multigrid corona limit]
  The multigrid corona limit is the characteristic polygon $\polygon$.
  For any multigrid patch $C$, there exists a constant $\delta_4\in \mathbb{R}^+$ such that 
  \[
    (n-\delta_4)\polygon \cap \multigrid
    \subseteq
    C_n
    \subseteq
    (n+\delta_4)\polygon \cap \multigrid.
  \]
  \label{prop:multigrid_corona_limit}
\end{proposition}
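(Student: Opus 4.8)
The plan is to combine the already-established control of the corona on the sparse set $\multigrid^\diamond$ (Lemma~\ref{lemma:two_lines_approximant}) with the fact that $\multigrid^\diamond$ is relatively dense inside the whole multigrid $\multigrid$ (a consequence of Corollary~\ref{corollary:density}). Intuitively, every multigrid vertex sits within a bounded Euclidean distance and a bounded graph distance of a vertex of $\multigrid^\diamond$, so the two-sided sandwich $(n\pm\delta_3)\polygon$ obtained for $\multigrid^\diamond$ can be transported to all of $\multigrid$ at the cost of enlarging the constant. Throughout I use that $C_n\subseteq\multigrid$ holds by definition, so the intersections with $\multigrid$ in the statement only constrain the outer scaled polygons.

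First I would prove the relative density of $\multigrid^\diamond$: there exist uniform constants such that every multigrid vertex $z$ admits some $z'\in\multigrid^\diamond$ with Euclidean distance $|z-z'|\le r$ and graph distance $\distance(z,z')\le k_r$. The argument is that $z$ lies in one of the $2d$ cones delimited by two adjacent dominant lines, say $\cone_{i,j}$. Applying Corollary~\ref{corollary:density} to the pair of directions $(i,j)$ produces a crossing $z'$ of type $(i,j)$ with $|z-z'|\le r$ and $\distance(z,z')\le k_r$. Such a $z'$ lies on both an $i$-line and a $j$-line, so it belongs to $\multigrid^\diamond$ whether it stays in $\cone_{i,j}$ or slips across the boundary $L_i$ or $L_j$ into an adjacent cone (each adjacent cone is still bounded by $L_i$ or $L_j$). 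The finitely many degenerate vertices near the origin are absorbed into the final constant.

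Second I would set up the conversion between Euclidean proximity and scaling of $\polygon$. Since $\polygon$ is a full-dimensional, centrally symmetric convex polygon, it contains a ball $B(0,\rho)$ of some inradius $\rho>0$ around the origin, whence for every $\lambda\ge 0$ one has the Minkowski inclusion $\lambda\polygon+B(0,r)\subseteq(\lambda+r/\rho)\polygon$. I then set
\[
  \delta_4 := \delta_3 + k_r + r/\rho .
\]
For the outer inclusion, take $z\in C_n$ and choose $z'\in\multigrid^\diamond$ as above; the triangle inequality for the graph distance gives $z'\in C_{n+k_r}$, Lemma~\ref{lemma:two_lines_approximant} gives $z'\in(n+k_r+\delta_3)\polygon$, and the inradius inclusion upgrades this to $z\in(n+k_r+\delta_3+r/\rho)\polygon=(n+\delta_4)\polygon$. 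For the inner inclusion, take a multigrid vertex $z\in(n-\delta_4)\polygon$ and its companion $z'\in\multigrid^\diamond$; the inradius inclusion gives $z'\in(n-\delta_4+r/\rho)\polygon$, and since $n-\delta_4+r/\rho\le(n-k_r)-\delta_3$ by the choice of $\delta_4$, Lemma~\ref{lemma:two_lines_approximant} places $z'\in C_{n-k_r}$; the graph-distance bound $\distance(z,z')\le k_r$ then yields $z\in C_n$.

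I expect the main obstacle to be the relative-density step, specifically the bookkeeping at the cone boundaries: one must be sure that the nearby crossing produced by Corollary~\ref{corollary:density} genuinely lands in $\multigrid^\diamond$ and not in a cone bounded by two directions both distinct from $i$ and $j$. This is what forces the use of a crossing of type $(i,j)$ (lying on both cone-defining lines) rather than an arbitrary nearby $i$-line vertex, and it is also why both a Euclidean and a graph-distance bound are required in Corollary~\ref{corollary:density}: the Euclidean bound feeds the polygon-scaling conversion, while the graph bound feeds the corona-membership step. The remaining arithmetic with $\delta_1,\delta_2,\delta_3$ and the inradius $\rho$ is routine.
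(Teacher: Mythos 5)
Your proof is correct and follows essentially the same route as the paper: transport the sandwich of Lemma~\ref{lemma:two_lines_approximant} from $\multigrid^\diamond$ to all of $\multigrid$ via the relative density of crossings (Corollary~\ref{corollary:density}), paying for the transfer by enlarging the constant with the Euclidean and graph density radii. Your conversion of Euclidean displacement into polygon scaling via the inradius $\rho$ (contributing $r/\rho$ rather than $r$) is in fact slightly more careful than the paper's, which adds $r$ directly to the scaling factor.
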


\begin{proof}
  This proposition results from combining Lemma~\ref{lemma:two_lines_approximant}
  on the shape of the two lines approximant and Corollary~\ref{corollary:density}
  on the relative density of vertices of a given type.
  For the sake of completeness we give the full details for the first inclusion.

  Define $\delta_4 = \delta_3 + r + k_r$.
  Take $z \in (n-\delta_4)\polygon \cap \multigrid$.
  Vertex $z$ is in some $\cone_{i,j}$ for some adjacent directions $i$ and $j$.
  By lemma~\ref{lemma:density}, there is a vertex $z'$ of type $i$ or $j$
  at Euclidean distance at most $r$ and graph distance at most $k_r$ from $z$.
  Note that since the boundaries of $\cone_{i,j}$ are a $i$-line and a $j$-line,
  we can also take $z'$ in $\cone_{i,j}$.
  We then have
  \[
    z'\in (n-\delta_4+r)\polygon \cap \multigrid^\diamond
    \quad\text{\emph{i.e.},}\quad
    z'\in (n-k_r-\delta_3)\polygon \cap \multigrid^\diamond.
  \]
  By Lemma \ref{lemma:two_lines_approximant} we have $z'$ in $C_{n-k_r}$, and therefore $z\in C_n$.
  The second inclusion follows by a similar argument.
\end{proof}

\subsection{Corona in the tiling}
\label{subsec:corona-tiling}

In order to transfer Proposition~\ref{prop:multigrid_corona_limit} on the multigrid graph $\multigrid$,
to the edge-to-edge rhombus tiling $\tiling$,
we employ the dualization function $\dualization$.
However, as an intermediate step, we define a linear function which is almost $\dualization$.

\begin{definition}[Linear almost dual function]
  We define the function $\lindual$ as
  \[
    \lindual(z) := \sum\limits_{i\leq 0<d} (z\cdot\zeta_i) \zeta_i.
  \]
\end{definition}

\begin{lemma}[The dualization is almost linear]
  The dualization function $\dualization$ is at uniformly bounded distance from the linear function $\lindual$.
  More precisely, for any $z \in \CC$, $|\dualization(z) - \lindual(z)| \leq 2d$.
  \label{lemma:almost_linear}
\end{lemma}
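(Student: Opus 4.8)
The plan is to bound the difference $\dualization(z) - \lindual(z)$ term by term using the definitions. Recall that $\dualization(z) = \sum_{0\leq i < d} \lceil z\cdot\zeta_i - \gamma_i\rceil\,\zeta_i$ and $\lindual(z) = \sum_{0\leq i < d} (z\cdot\zeta_i)\,\zeta_i$. Subtracting, I would write the difference as a single sum
\[
  \dualization(z) - \lindual(z)
  = \sum_{0\leq i < d} \Bigl(\lceil z\cdot\zeta_i - \gamma_i\rceil - z\cdot\zeta_i\Bigr)\zeta_i.
\]

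The key step is then to bound the scalar coefficient of each $\zeta_i$. For the ceiling function we have the elementary inequality $0 \leq \lceil x\rceil - x \leq 1$ for every real $x$. Applying this with $x = z\cdot\zeta_i - \gamma_i$ gives $\lceil z\cdot\zeta_i - \gamma_i\rceil = (z\cdot\zeta_i - \gamma_i) + \theta_i$ for some $\theta_i \in [0,1]$, so the coefficient of $\zeta_i$ equals $\theta_i - \gamma_i$. Here a subtlety arises: the offsets $\gamma_i$ are arbitrary reals, so the naive coefficient $\theta_i - \gamma_i$ is not bounded. I would resolve this by noting that the offsets may be normalized to $\gamma_i \in [0,1)$ without changing the grid $\grid(\zeta_i,\gamma_i)$, since shifting $\gamma_i$ by an integer only reindexes the parallel lines; equivalently, the relevant quantity is the fractional behavior, and one checks directly that $\bigl|\lceil z\cdot\zeta_i - \gamma_i\rceil - (z\cdot\zeta_i)\bigr| \leq 1 + |\gamma_i \bmod 1| \leq 2$ once offsets lie in $[0,1)$.

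With each coefficient bounded in absolute value by $2$ and each $|\zeta_i| = 1$, the triangle inequality yields
\[
  \bigl|\dualization(z) - \lindual(z)\bigr|
  \leq \sum_{0\leq i < d} \bigl|\lceil z\cdot\zeta_i - \gamma_i\rceil - z\cdot\zeta_i\bigr|\,|\zeta_i|
  \leq 2d,
\]
which is exactly the claimed bound, uniform in $z$.

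I expect the main obstacle to be the bookkeeping around the offsets $\gamma_i$: a careless estimate only controls $\lceil z\cdot\zeta_i - \gamma_i\rceil$ against $z\cdot\zeta_i - \gamma_i$ (distance $\leq 1$), but the statement compares against $z\cdot\zeta_i$ with no offset, so the $\gamma_i$ terms must be absorbed into the constant. The clean way is to observe that translating each grid by its offset (replacing $z$ by $z$ minus a suitable shift, or normalizing $\gamma_i\in[0,1)$) reduces to the offset-free case and accounts for the factor $2$ rather than $1$ per direction. Once this normalization is made explicit, the rest is the routine triangle-inequality estimate above.
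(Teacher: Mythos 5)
Your proposal is correct and follows essentially the same route as the paper's proof: write the difference as $\sum_i(\lceil z\cdot\zeta_i-\gamma_i\rceil - z\cdot\zeta_i)\zeta_i$, bound each coefficient by $2$ using the unit error of the rounding plus the normalized offset $\gamma_i\in[0,1)$, and conclude by the triangle inequality with $|\zeta_i|=1$. Your extra remark justifying the normalization of the offsets is a point the paper simply assumes, but it does not change the argument.
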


\begin{proof}
  Recall that for any $i$, $|\zeta_i|=1$ and $0\leq \gamma_i < 1$. Take $z \in \CC$.
  We have
  \[
    \dualization(z)-\lindual(z) =
    \sum\limits_{i\leq 0<d} \left( \left\lfloor z\cdot\zeta_i-\gamma_i \right\rfloor - z\cdot \zeta_i \right)\zeta_i
  \]
  and
  \[
    \left| \left\lfloor z\cdot \zeta_i-\gamma_i \right\rfloor - z\cdot \zeta_i \right| \leq 2.
  \]
  Consequently,
  \[
  |\dualization(z)-\lindual(z)|
  \leq
  \sum\limits_{i\leq 0<d} \left| \left( \left\lfloor z\cdot\zeta_i-\gamma_i \right\rfloor - z\cdot\zeta_i \right)\zeta_i\right|
  \leq
  \sum\limits_{i\leq 0<d} 2
  \leq
  2d
  .\]
\end{proof}

The point of Lemma~\ref{lemma:almost_linear} is that,
if $\Delta$ is the corona limit of a multigrid,
then its linear almost-dual $\lindual(\Delta)$ is the corona limit in the dual tiling,
from which we obtain the main theorem.

\setcounter{theorem}{0}
\begin{theorem}[Corona limit of a regular multigrid dual tiling]
  The corona limit of a regular multigrid dual tiling is its characteristic polygon $\dual{\chi}$.
\end{theorem}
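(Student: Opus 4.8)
The plan is to transfer the multigrid corona estimate (Proposition~\ref{prop:multigrid_corona_limit}) through the dualization function $\dualization$, using the fact that $\dualization$ is an edge-preserving map from the multigrid graph $\multigrid$ onto the tiling $\tiling$, and that it sits at uniformly bounded distance from the linear map $\lindual$ (Lemma~\ref{lemma:almost_linear}). The central observation I would establish first is that $\dualization$ is a graph isomorphism (or at least a bounded-distance quasi-isometry) between the multigrid graph and the tiling graph: each multigrid vertex $z$ (a crossing of type $(i,j)$) maps to a tile $\dualization(z)$, and two multigrid vertices adjacent along an $i$-line map to tiles differing by $\pm\zeta_i$, which by Definition~\ref{def:dualization} are exactly adjacent tiles in $\tiling$. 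This means the graph distance is preserved, so the corona sequence of a tiling patch is the $\dualization$-image of the corona sequence of the corresponding multigrid patch. Hence if $C_n$ is a multigrid corona, then $\dualization(C_n)$ is the corresponding tiling corona $P_n$.

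Next I would push the inclusions of Proposition~\ref{prop:multigrid_corona_limit} through $\lindual$. Since $\lindual$ is linear, it commutes with homothety: $\lindual((n\pm\delta_4)\polygon) = (n\pm\delta_4)\lindual(\polygon)$. The key computation is that $\lindual(\polygon) = \dual{\polygon}$, which follows directly by comparing Definition~\ref{def:polygon} with the definition of $\lindual$: applying $\lindual$ to the vertex $\pm\polygon_i\zeta_i^\bot$ of $\polygon$ yields $\pm\polygon_i\sum_j(\zeta_i^\bot\cdot\zeta_j)\zeta_j = \pm\dual{\polygon_i}$, the vertices of $\dual{\polygon}$. So the linear image of the characteristic polygon of the multigrid is precisely the characteristic polygon of the tiling.

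To control the error, I would replace $\dualization$ by $\lindual$ at a uniform cost of $2d$ via Lemma~\ref{lemma:almost_linear}. Concretely, for a tile $\dualization(z)$ with $z\in(n-\delta_4)\polygon\cap\multigrid$, we have $\dualization(z)$ within Euclidean distance $2d$ of $\lindual(z)\in(n-\delta_4)\dual{\polygon}$, so $\dualization(z)\in(n-\delta_4)\dual{\polygon}$ inflated by a fixed additive margin $2d$; absorbing $2d$ into the constant (using that $\dual{\polygon}$ has a minimal-modulus vertex bounded away from $0$, as in the definition of $\delta_1$) gives a uniform $\delta$ with $(n-\delta)\dual{\polygon}\subseteq \overbow{P_n}\subseteq(n+\delta)\dual{\polygon}$. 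Dividing by $n$ and passing to the limit yields Hausdorff convergence $\overbow{P_n}/n\to\dual{\polygon}$, which is the claim. By the single-tiles lemma (Lemma~1), it suffices to argue for patches $\{t\}$, so this establishes the uniform corona limit. The main obstacle I expect is the bookkeeping at the boundary of $\overbow{P_n}$: Proposition~\ref{prop:multigrid_corona_limit} controls which \emph{vertices} of $\multigrid$ lie in $C_n$, but the corona limit is defined via the polygonal contour $\overbow{P_n}$ of the \emph{tiles}, so I would need the relative density of vertices (Corollary~\ref{corollary:density}) to guarantee that no tile of $\overbow{P_n}$ strays Hausdorff-far from the scaled $\dual{\polygon}$ — i.e., that the image polygon's boundary is genuinely approximated and not merely its vertex set — and to confirm that $\dualization$ does not distort the contour by more than the fixed rhombus diameter.
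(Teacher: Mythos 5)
Your proposal follows essentially the same route as the paper: transfer Proposition~\ref{prop:multigrid_corona_limit} to the tiling via the crossing--tile correspondence, replace $\dualization$ by the linear map $\lindual$ at uniform cost $2d$ (Lemma~\ref{lemma:almost_linear}), check $\lindual(\polygon)=\dual{\polygon}$, and conclude by linearity and continuity after dividing by $n$. Your explicit vertex computation $\lindual(\pm\polygon_i\zeta_i^\bot)=\pm\dual{\polygon_i}$ and the remark on boundary bookkeeping are details the paper leaves implicit, but the argument is the same.
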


\begin{proof}
  Let $\multigrid(\zeta,\gamma)$ be a regular multigrid and $\tiling(\zeta,\gamma)$ its dual tiling.
  Let $P_0$ be a finite patch in $\tiling(\zeta,\gamma)$.
  Denote $C_0$ the set of multigrid vertices in $\multigrid(\zeta,\gamma)$ such that $\dual{C_0} = P_0$.
  By Proposition~\ref{prop:multigrid_corona_limit}, the corona limit of $C_0$ is $\chi$.
  We now prove that the corona limit of $P_0$ is $\lindual(\polygon)=\dual{\polygon}$.

  Take $n\in\N$, denote $P_n$ and $C_n$ the $n$-th corona of $P_0$ and $C_0$.
  We have that $\lindual(\overbow{C_n})$ is within bounded distance of $P_n$.
  Indeed,, each multigrid cell in $\overbow{C_n}$ is the dual of a vertex of a tile in $P_n$,
  and each multigrid cell that is edge adjacent to $\overbow{C_n}$ is the dual of a boundary vertex of $P_n$,
  and each multigrid cell that is neither in or edge-adjacent to $\overbow{C_n}$ is not a vertex in $P_n$.
  Recall that if $c$ is a cell in $\multigrid(\zeta,\gamma)$
  and $x$ is its dual vertex in $\tiling(\zeta,\gamma)$,
  then for all $z\in c$ we have $\dualization(z)=x$.
  Now, by applying Lemma~\ref{lemma:almost_linear},
  we obtain that the Hausdorff distance between $\overbow{P_n}$ and $\lindual(\overbow{C_n})$ is at most $2d$.
  From this we have $\hausdorff\left(\overbow{P_n}/n , \lindual(\overbow{C_n})/n\right) \leq 2d/n$.
  Since $\lindual$ is linear we have $\lindual(\overbow{C_n})/n = \lindual(\overbow{C_n}/n)$.
  Recall also that by Proposition \ref{prop:multigrid_corona_limit} we have $\overbow{C_n}/n \tendsto{n} \polygon$.
  By continuity of $\lindual$ we have $\lindual(\overbow{C_n}/n) \tendsto{n} \lindual(\polygon)$.
  Overall we obtain $\overbow{P_n}/n \tendsto{n} \lindual(\polygon)$, that is : the corona limit of $P_0$ is $\lindual(\chi)$.
  We have proved that the corona limit of any finite patch is $\lindual(\polygon)=\dual{\polygon}$.
\end{proof}

\section{Conclusions and perspectives}
\label{s:conclusion}

In this article, we have greatly generalized the fact that, from any initial patch of tiles
in a Penrose tiling, the edge-to-edge growth process always tends to a regular decagon~\cite{ai16}.
Indeed, for any multigrid dual tiling, Theorem~\ref{theorem:corona} proves
that this corona limit tends to a polygon with parallel opposite sides,
whose orientations are given by the multigrid normal vectors (it is independent of the offsets).
We call it the characteristic polygon, and formulate its coordinates in the multigrid
and in the dual tiling (Definition~\ref{def:polygon}).
To explain this phenomenon, we have explored the corresponding growth process in the multigrid,
where the successive coronas can be ``averaged'' thanks to the regular spacing of multigrid lines
(straightforward on individual lines, but not trivial to combine).
Our characterization of the corona limit smoothly transfers from the multigrid to its dual tiling,
because the dualization is almost linear (Lemma~\ref{lemma:almost_linear}).

Besides edge-to-edge propagation, it may be meaningful in certain contexts to study
the growth process relative to a given neighborhood, in the spirit of~\cite{gg93}.
The notion of neighborhood makes sense in multigrid dual tilings,
because they have finite local complexity (for the atlas of Penrose tilings, see~\cite{s96,fl23}).
The neighborhood relation may also be ``directionnal'', meaning that the orientation of rhombuses
may be taken into account, for example by considering only the propagation in the north direction.
Early simulations on Penrose tilings with fat and thin rhombuses
suggest that the corona limit is still the regular decagon
$\dual{\polygon}$ (see Figure~\ref{fig:corona_example_big})
when the neighborhood is:
\begin{itemize}
  \item restricted to different tile types only (thin with fat, and fat with thin, which is symmetric),
  \item removed for fat with fat tile types (which is symmetric),
  \item removed for thin with thin tile types (which is symmetric),
  \item restricted to the postive direction
    (which is not symmetric: out-neighbors are taken only on one side of the tile,
    according to the orientation of the multigrid lines at the corresponding crossing).
\end{itemize}
When the neighborhood is restricted to one cardinal direction
(\emph{i.e.} directed along a half-plane, which is not symmetric),
we observe the polygon $\dual{\polygon}$ truncated
(no propagation occurs in the direction opposite to the half-plane).

Another track of research would be to pursue the generalization of $\Z^2$ dynamics
to multigrid dual tilings (in particular the most well known Penrose tilings,
but arguments may not be particular to one set of normal vectors and offsets)
with percolation processes.
Successive coronas correspond to the propagation process where a tile is included in the selection
(often refered to as being ``infected'') whenever it has at least one of its four adjacent tiles already selected.
Bootstrap percolation is the propagation process where a tile is infected whenever it has at least two
adjacent tiles (or more generaly, neighbors) already infected.
From an initial configuration $c\in \{0,1\}^{\mathbb{Z}^2}$ taken at random along a Bernoulli distribution of parameter $p>0$,
the probability of percolation (contaminating the full $\mathbb{Z}^2$) is 1 \cite{ve87}.
The same behaviour of almost sure percolation from any initial configuration taken at random along a Bernoulli of positiev parameter $p$ appears to generalize to Penrose and multigrid dual tilings.

On $\mathbb{Z}^2$, percolation processes follow a $0-1$ law \cite{ve87},
that is, for any percolation process the probability of percolation is either $0$ or $1$ (and never in between)
when the initial configuration is taken along a Bernoulli distribution.
We conjecture that on sufficiently regular rhombus tilings,
including Penrose and multigrid dual tilings, a similar $0-1$ law holds.

\section*{Acknowledgments}

This work received support from ANR-18-CE40-0002 FANs projet,
STIC AmSud CAMA 22-STIC-02 (Campus France MEAE) project,
HORIZON-MSCA-2022-SE-01-101131549 ACANCOS project,
ANR JCJC 2019 19-CE48-0007-01 C\_SyDySi project,
RIN DynNet project, 
and a postdoctoral grant from Institut Archimède (Aix Marseille University).

\bibliographystyle{alpha}
\bibliography{biblio}

\end{document}